\documentclass[onecolumn,draftcls]{IEEEtran}
\usepackage[utf8]{inputenc}
\usepackage{textcomp}

\usepackage{cite,color}
\usepackage{graphicx}
\usepackage{amsmath,mathtools}
\usepackage{longtable,tabularx}
\usepackage{subcaption}

\usepackage{amsthm}
\setlength\LTleft{0pt} 
\newtheorem{theorem}{Theorem}
\newtheorem{assumption}{Assumption}
\newtheorem{lemma}{Lemma}

\title{\bf NMPC-Based Cooperative Strategy For A Target Pair To Lure Two Attackers Into Collision}

\author{Amith Manoharan and P.B. Sujit %
	\thanks{Amith Manoharan is a Graduate Student at IIIT Delhi, New Delhi -- 110020, India. email: amithm@iiitd.ac.in}
	\thanks{P.B. Sujit is Associate Professor at IISER Bhopal, Bhopal -- 462066, India. email: sujit@iiserb.ac.in}}

\begin{document}
	
	\maketitle
	\IEEEpeerreviewmaketitle
	
	\begin{abstract}
		
		This paper presents a cooperative target defense strategy using nonlinear model-predictive control (NMPC) framework for a two--targets two--attackers (2T2A) game. The 2T2A game consists of two attackers and two targets. Each attacker needs to capture a designated target individually. However, the two targets cooperate to lure the attackers into a collision. We assume that the cooperative target pair do not have perfect knowledge of the attacker states, and hence they estimate the attacker states using an extended Kalman filter (EKF). The NMPC scheme computes closed-loop optimal control commands for the targets while respecting imposed state and control constraints. Theoretical analysis is carried out to determine regions that will lead to the targets' survival, given the initial positions of the attacker and target agents. Numerical simulations are carried out to evaluate the performance of the proposed NMPC-based strategy for different scenarios.
	\end{abstract}
	
	
	
	
	
	

	\section{Introduction}\label{sec:intro}
Differential games find an important role in aerial combat scenarios involving missiles and aircraft and in several other security games. Isaacs~\cite{isaacsbook} pioneered the work on modeling these scenarios as pursuit-evasion games with the missile as pursuer and the aircraft as the evader. The simplest of these games is a two-agent game with a single pursuer and evader \cite{isler2005randomized,bopardikar2008discrete,jagat2017nonlinear,sunkara2018pursuit}. The objective of the pursuer is to capture the evader while the evader tries to escape from the pursuer. There are several variants of two-agent pursuit-evasion games \cite{merz1974homicidal,alpern2008princess,karnad2009lion}. The two-agent game has also been extended to have multiple pursuers against a single evader in \cite{c2,sun2017multiple,ramana2017pursuit,pachter2019two,pachter2020cooperative}.

Cooperation among the pursuers or evaders allows one to determine a variety of solutions for the multi-agent pursuit-evasion games. One such game is the three-agent target-attacker-defender (TAD) game \cite{perelman2011cooperative}. In this game, the attacker tries to capture the target, while the defender tries to intercept the attacker before it can reach the target. The target and the defender cooperate to increase their payoff. The TAD problem has been solved using different techniques like linear-quadratic formulations \cite{perelman2011cooperative,c8,garcia2020defense}, command-to-line-of-sight (CLOS) \cite{c9,ratnoo2012guidance,yamasaki2013modified}, sliding-mode control \cite{kumar2017cooperative}, optimal control \cite{rubinsky2014three,c11}, differential game theory \cite{c12,garcia2017cooperative,garcia2018design,garcia2021complete}, and model-predictive control \cite{manoharan2019nmpc}. Garcia~et~al.~[28] present a three-agent game in 3D different from the generic TAD game where a team of two pursuers guards a target against an evader with the same speed as the pursuers.

A natural extension of the three-agent TAD game is the four-agent game. 
Casbeer et al. \cite{casbeer2018target} introduce a four-agent game by introducing an additional defender to the TAD game proposed in \cite{garcia2017cooperative}. 
Garcia et al. \cite{garcia2020two} formulate a linear-quadratic differential game that can be implemented in a state feedback form for two attackers that are launched against a stationary target while two interceptors are fired to defend the target. A variant of \cite{garcia2020two} is the assignment problem for a multi-pursuer multi-evader differential game \cite{garcia2020multiple}, where optimal assignments of pursuers to evaders are studied. Another interesting formulation on two pursuers and two targets is given in \cite{tan2018cooperative}, where the targets use a state-dependent-Riccati-equation (SDRE) based approach to lure the pursuers into collision for the targets' survival. The initial positions of the pursuers and the targets have an impact on the ability of the SDRE formulation to lure the pursuers into collision. Several simulations with different initial conditions are conducted to show this impact. However, no theoretical analysis is presented. A two-on-two scenario solved using linear-quadratic differential game strategy was presented in \cite{liang2020guidance} for intercepting a spacecraft with active defense.  

Inspired by \cite{tan2018cooperative}, we advance the two--targets two--attacker (2T2A) game formulation to use nonlinear model-predictive control (NMPC) for the computation of optimal control commands. We show that the 2T2A can be formulated as a game of kind~\cite{isaacsbook} in which the outcome is determined by the initial positions of the attackers and the targets. To facilitate this outcome, we determine the escape region for the targets theoretically.

The strategies presented in \cite{garcia2020defense,casbeer2018target,garcia2020two,liang2020guidance} assume perfect information about the attacker states and guidance laws employed by it. This paper relaxes the above assumption using an extended Kalman filter (EKF) for the attacker state estimation. Also, since the NMPC is closed-loop and the control inputs are determined with the current state estimates while also taking into account the future trajectories, it provides the flexibility to adapt to uncertain environments and unknown attacker guidance laws, which is not possible in the case of open-loop optimal control strategies like \cite{casbeer2018target} where the solution is predetermined. These factors indicate that the NMPC has an additional advantage of real-world implementability. The formulations which use closed-loop solutions \cite{garcia2020defense,garcia2020two,liang2020guidance,tan2018cooperative} linearize the system resulting in approximation errors. Our approach is based on the nonlinear model of the system and hence is superior. Also, the inherent constraint handling of NMPC helps to design controls that strictly adhere to the specified bounds. Tan et al.~\cite{tan2018cooperative} use simulations to show the region of escape while we present the theoretical framework that determines whether the attackers can be lured to collide or not. 

The main contributions of this article are 1) NMPC formulation for the 2T2A game without any assumptions on the attacker states or guidance laws, (2) theoretical analysis of the escape region for the four-agent game, and (3) validation of the proposed approach and the escape region results through numerical simulations under different conditions.

The rest of the paper is organized as follows. The problem formulation is given in Section~\ref{sec:problem}. The nonlinear model-predictive control scheme is explained in Section~\ref{sec:nmpc}. Analysis of the target escape region is given in Section~\ref{sec:escape}. Simulation results are presented in Section~\ref{sec:results}, and the conclusions are given in Section~\ref{sec:conclusions}. 
	\section{Problem formulation}\label{sec:problem}
Consider a pursuit-evasion game between a pair of attackers $(A1,A2)$ pursuing a pair of targets $(T1,T2)$ as shown in Fig.~\ref{fig:E_geometry}. We call this the two--targets two--attackers (2T2A) game and is inspired from~\cite{tan2018cooperative}, where the objective of the target pair is to maneuver in such a way that the attackers collide with each other, ensuring the survival of the targets. The target pair is cooperative, whereas the attackers act individually. We consider the following assumptions in the formulation of this problem.

\begin{figure}
	\centering
	\includegraphics[width=9cm,height=8cm]{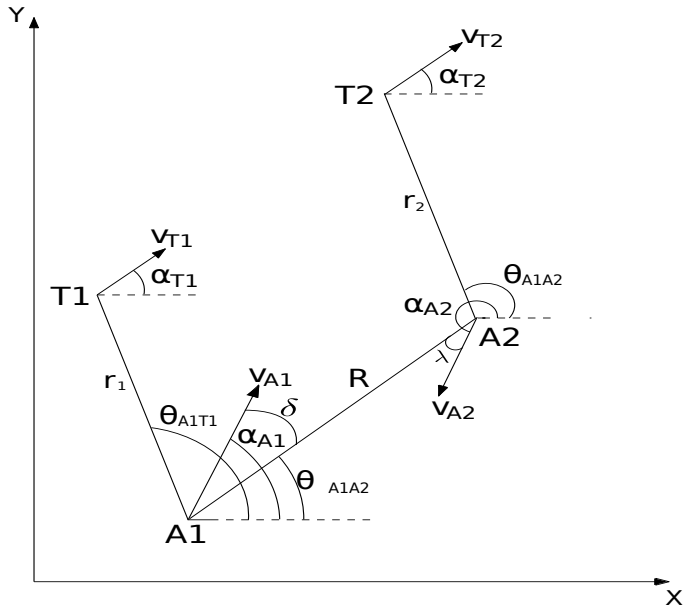}
	\caption{Four-agent engagement geometry. $A1$ and $A2$ are the attackers, $T1$ and $T2$ are the targets. }
	\label{fig:E_geometry}
\end{figure}

\subsection{Assumptions}
\begin{assumption}
We use a 2D Cartesian space, assuming that the altitude of the agents remain constant throughout the game.
\end{assumption}
\begin{assumption}\label{assump:constant_speed}
All the agents move with constant speed throughout the game.    
\end{assumption}
\begin{assumption}
Point mass models and only kinematic equations are considered for agents.
\end{assumption}
\begin{assumption}
Each target is pursued by one attacker, and each attacker pursues only one target.
\end{assumption}
\begin{assumption}
The target pair cooperate and act as a team, whereas it is assumed that the attackers do not exchange information and do not employ any inter-collision avoidance scheme.
\end{assumption}
\begin{assumption}
The target pair do not have information about the attacker states. They are estimated.
\end{assumption}
\begin{assumption}
The process and measurement noises $q(k) $ and $ \mu(k) $ are additive and zero-mean white Gaussian.
\end{assumption}
\begin{assumption}\label{assump:sensor}
	We assume that each agent (attacker or target) has onboard sensors to measure the range and LOS information of all other agents in the arena. 
\end{assumption}
\begin{assumption}
The target pair communicate with each other, sharing information.
\end{assumption}

\subsection{Engagement geometry}
The engagement geometry of the four agents is shown in Fig.~\ref{fig:E_geometry}. The equations of motion governing the relative motion of the four agents can be written as     
\begin{eqnarray}
\dot{x}_{T1}(t) &=&v_{T1}\cos{\alpha_{T1}}(t), \\
\dot{y}_{T1}(t) &=&v_{T1}\sin{\alpha_{T1}}(t), \\
\dot{x}_{T2}(t) &=&v_{T2}\cos{\alpha_{T2}}(t), \\
\dot{y}_{T2}(t) &=&v_{T2}\sin{\alpha_{T2}}(t), \\
\dot{x}_{A1}(t) &=&v_{A1}\cos{\alpha_{A1}}(t), \\
\dot{y}_{A1}(t) &=&v_{A1}\sin{\alpha_{A1}}(t), \\
\dot{x}_{A2}(t) &=&v_{A2}\cos{\alpha_{A2}}(t), \\
\dot{y}_{A2}(t) &=&v_{A2}\sin{\alpha_{A2}}(t), \\
\dot{R}(t) &=&v_{A2}\cos\left(\alpha_{A2}-\theta_{A1A2}\right) - v_{A1}\cos\left(\alpha_{A1}-\theta_{A1A2}\right),\\
\dot{r}_1(t) &=&v_{T1}\cos\left(\alpha_{T1}-\theta_{A1T1}\right) - v_{A1}\cos\left(\alpha_{A1}-\theta_{A1T1}\right),\\
\dot{r}_2(t) &=&v_{T2}\cos\left(\alpha_{T2}-\theta_{A2T2}\right) - v_{A2}\cos\left(\alpha_{A2}-\theta_{A2T2}\right),\\
\dot{\theta}_{A1A2}(t) &=&\frac{1}{R}\left( v_{A2}\sin\left(\alpha_{A2}-\theta_{A1A2}\right) - v_{A1}\sin\left(\alpha_{A1}-\theta_{A1A2}\right)\right),\\
\dot{\theta}_{A1T1}(t) &=&\frac{1}{r_1}\left( v_{T1}\sin\left(\alpha_{T1}-\theta_{A1T1}\right) - v_{A1}\sin\left(\alpha_{A1}-\theta_{A1T1}\right)\right),\\
\dot{\theta}_{A2T2}(t) &=&\frac{1}{r_2}\left( v_{T2}\sin\left(\alpha_{T2}-\theta_{A2T2}\right) - v_{A2}\sin\left(\alpha_{A2}-\theta_{A2T2}\right)\right),\\
\dot{\alpha}_{T1}(t) &=& u_1(t),\\
\dot{\alpha}_{T2}(t) &=& u_2(t),
\end{eqnarray}
where $ v_{T1},~v_{T2}$ are the velocities of the targets $T1$ and $T2$, $v_{A1},~v_{A2} $ are the velocities of the attackers $A1$ and $A2$, $\alpha_{T1}\left( t\right),~\alpha_{T2}\left( t\right)$ are the heading angles of the respective targets, and $\alpha_{A1}\left( t\right),~\alpha_{A2}\left( t\right)$ are the heading angles of the respective attackers. $\theta_{A1A2}(t)$ is the line-of-sight (LOS) angle between the attackers $A1$ and $A2$, and $ \theta_{A1T1}(t),~\theta_{A2T2}(t) $ are the LOS angles between the respective targets and the attackers. $R(t)$ is the distance between the attackers $ A1 $ and $ A2 $, $r_1(t)$ is the distance between $ A1 $ and $ T1 $ while $ r_2(t) $ is the distance between $ A2 $ and $ T2 $. $u_1(t)$ and $u_2(t)$ are the control inputs of $T1$ and $T2$. The angles $ \delta(t) $ and $ \lambda(t) $ are defined as
\begin{eqnarray}
\delta(t) &=&\alpha_{A1}-\theta_{A1A2} ,\\
\lambda(t) &=&\pi+\left( \alpha_{A2}-\theta_{A1A2}\right) .
\end{eqnarray}
All the states and heading angles of the agents are changing with respect to time $t$, and the notation $(t)$ is omitted in the rest of the
paper for readability. We will now formulate the NMPC framework using the defined assumptions and equations of motion.

\section{Nonlinear model predictive control formulation}\label{sec:nmpc}

\begin{figure}
	\centering
	\includegraphics[width=9cm]{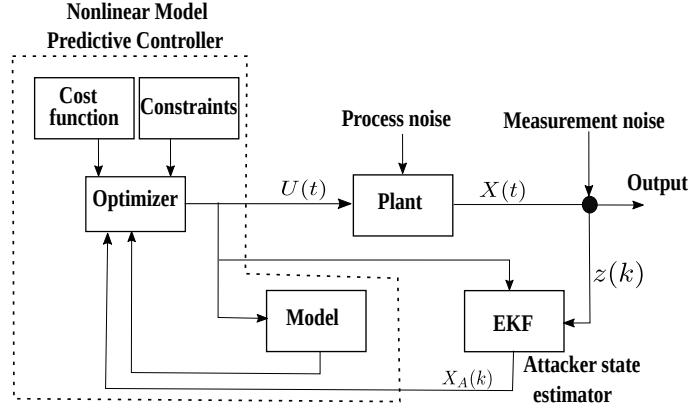}
	\caption{Nonlinear model predictive control scheme. }
	\label{fig:block_diagram}
\end{figure}

We propose a strategy wherein the target team uses a nonlinear model-predictive control (NMPC) scheme to compute their control commands so as to lure the attackers into a collision course. Model-predictive control is a state-of-the-art optimal control technique that considers future outcomes while calculating the current control input~\cite{mpcbook}. A mathematical model of the system under consideration is used to predict the future states and control actions up to a finite time known as the prediction horizon. After applying the first control input from the calculated sequence, the prediction window is shifted forward in time. This receding horizon approach combined with state-feedback helps in re-planning the control sequence to counter the uncertainties involved in real-world systems. Fig.~\ref{fig:block_diagram} shows the structure of the proposed NMPC scheme. The main components of the controller are a mathematical model of the system, a nonlinear convex optimizer, a cost function, and physical constraints on the states and controls. The plant represents the four-agent system, and the EKF is used for estimating the attacker states.  

The objective function for the NMPC can be formulated as:
\begin{equation}
\min_{\boldsymbol{\dot{\alpha}_{T1},\dot{\alpha}_{T2}} \in \mathcal{PC}(t,t+\tau_h)}J= \int_{t}^{t+\tau_{h}}\left( w_1 R+w_2\delta+w_3\lambda-w_4 r_1- w_5 r_2\right) \mathop{}\!\mathrm{d}t, 
\end{equation}
subject to:
\begin{eqnarray}
\dot{X}&=&f\left( X,U,t\right),\\
U&\in&\left[ U^{-},U^{+}\right],
\end{eqnarray}
where $ X=\{ x_{T1},y_{T1},x_{T2},y_{T2},x_{A1},y_{A1},x_{A2},y_{A2},R,r_1,r_2,\\ \theta_{A1A2},\theta_{A1T1},\theta_{A2T2},\alpha_{T1},\alpha_{T2}\} $ are the states, and $ U=\{ \dot{\alpha}_{T1},\dot{\alpha}_{T2}\} $ are the control commands computed for the targets $T1$ and $T2$. The $ U^{-} $ and $ U^{+} $ are the lower and upper bounds of $ U$, $ \mathcal{PC}(t,t+\tau_h) $ denotes the space of piece-wise continuous function defined over the time interval $ \left[ t,t+\tau_h\right]  $, and $ w_{i,i=1 \ldots 5} $ are the weights. The term $ R $ minimizes the distance between the attackers to make them collide, $ r_1 $ and $ r_2 $ are maximized so that the targets are not in a collision course with the attackers, and $ \delta $ and $ \lambda $ are minimized to keep the velocity vectors on the $ A1-A2 $ line-of-sight (LOS). Since the attacker states are unknown to the target pair, they are estimated using an EKF, which is formulated as follows.

\subsection*{Extended Kalman Filter}
The structure of EKF is given as~\cite{ekfbook}

\noindent{Model}\label{sec:ekf}
\begin{eqnarray}
	X_A(k)&=&f_A\left( X_A(k-1),U(k),k\right) +q(k),\\
	z(k)&=&h\left( X_A(k),U(k),k\right) +\mu(k),
\end{eqnarray}  
\noindent {Prediction}
\begin{align}
	X_A(k|k-1)&=f_A( X_A(k-1|k-1),U(k),k)\Delta t+X_A(k-1|k-1), \\
	P(k|k-1)&=\nabla F_{X_A}P(k-1|k-1)\nabla F_{X_A}'+Q,\\
	z(k|k-1)&=h( X_A(k|k-1)),
\end{align}
\noindent {Update}
\begin{eqnarray}
	X_A(k|k)&=&X_A(k|k-1)+K(k)\nu(k),  \\
	P(k|k)&=&P(k|k-1)-K(k)S(k)K'(k),\\
	\nu(k)&=&z(k)-z(k|k-1),\\
	K(k)&=&P(k|k-1)\nabla H_{X_A}'S^{-1}(k),\\
	S(k)&=&\nabla H_{X_A}P(k|k-1)\nabla H_{X_A}'+\Sigma,
\end{eqnarray}
where ${X_A}(k)$ and $z(k) $ represent the attacker state model and measurement model, respectively, and $P(k)$ is the estimation covariance matrix. $ q(k) $ and $ \mu(k) $ are the process and measurement noises, and $ Q $ and $ \Sigma $ are the state covariance and measurement covariance matrices. $ \nu(k) $ is the innovation parameter, and $ K(k) $ is the Kalman gain. The attacker states $\left\lbrace x_{A1},y_{A1},\alpha_{A1},x_{A2},y_{A2},\alpha_{A2}\right\rbrace $ and controls $ a_{A1},a_{A2} $ need to be estimated, hence the estimation model is represented with dynamics
\begin{equation}
f_A=\begin{bmatrix}
\dot{x}_{A1}\\
\dot{y}_{A1}\\
\dot{x}_{A2}\\
\dot{y}_{A2}\\
\dot{\alpha}_{A1}\\
\dot{\alpha}_{A2} \\
\dot{a}_{A1}\\
\dot{a}_{A2}
\end{bmatrix}=
\begin{bmatrix}
v_{A1}\cos\alpha_{A1}\\
v_{A1}\sin\alpha_{A1}\\
v_{A2}\cos\alpha_{A2}\\
v_{A2}\sin\alpha_{A2}\\
\frac{a_{A1}}{v_{A1}}\\
\frac{a_{A2}}{v_{A2}} \\
-a_{A1}\\
-a_{A2}
\end{bmatrix},
\end{equation}	
and the Jacobian of $ f_A $ is
\begin{equation}
\nabla F_{X_A}=\\
\begin{bmatrix}
0 & 0 & 0 & 0 & -v_{A1}\sin\alpha_{A1} & 0 & 0 & 0\\
0 & 0 & 0 & 0 &  v_{A1}\cos\alpha_{A1} & 0 & 0 & 0\\
0 & 0 & 0 & 0 & 0 & -v_{A2}\sin\alpha_{A2} & 0 & 0\\
0 & 0 & 0 & 0 & 0 &  v_{A2}\cos\alpha_{A2} & 0 & 0\\
0 & 0 & 0 & 0 & 0 & 0 &\frac{1}{v_{A1}} & 0\\
0 & 0 & 0 & 0 & 0 & 0 & 0 & \frac{1}{v_{A2}}\\
0 & 0 & 0 & 0 & 0 & 0 & -1 & 0 \\
0 & 0 & 0 & 0 & 0 & 0 & 0 & -1
\end{bmatrix}.
\end{equation} 
According to Assumption~\ref{assump:sensor}, the quantities $ R,r_1,r_2,\theta_{A1A2},\theta_{A1T1} $ and $ \theta_{A2T2} $ can be measured and the measurement model is given as
{\small
\begin{equation}
h = \begin{bmatrix}
\sqrt{\left( x_{T1}-x_{A1}\right) ^{2}+\left( y_{T1}-y_{A1}\right) ^{2}}\\
\sqrt{\left( x_{T2}-x_{A2}\right) ^{2}+\left( y_{T2}-y_{A2}\right) ^{2}}\\
\sqrt{\left( x_{A2}-x_{A1}\right) ^{2}+\left( y_{A2}-y_{A1}\right) ^{2}}\\
\tan^{-1} \left( \frac{y_{T1}-y_{A1}}{x_{T1}-x_{A1}}\right)\\
\tan^{-1} \left( \frac{y_{T2}-y_{A2}}{x_{T2}-x_{A2}}\right)\\
\tan^{-1} \left( \frac{y_{A2}-y_{A1}}{x_{A2}-x_{A1}}\right)
\end{bmatrix}.
\end{equation}}
The Jacobian of the measurement model is given by \eqref{eq:jh}.
\begin{table*}[h!]
\begin{equation}
\nabla H_{X_A}=\\ \begin{bmatrix}
\frac{-(x_{T1}-x_{A1})}{r_1}&
\frac{-(y_{T1}-y_{A1})}{r_1}& 0 & 0 & 0 & 0 & 0 & 0\\
0 & 0 & \frac{-(x_{T2}-x_{A2})}{r_2}&
\frac{-(y_{T2}-y_{A2})}{r_2}& 0 & 0 & 0 & 0\\
\frac{-(x_{A2}-x_{A1})}{R} & \frac{-(y_{A2}-y_{A1})}{R} & \frac{x_{A2}-x_{A1}}{R} & \frac{y_{A2}-y_{A1}}{R} & 0 & 0 & 0 & 0\\
\frac{y_{T1}-y_{A1}}{r_1^2}&
\frac{-(x_{T1}-x_{A1})}{r_1^2}& 0 & 0 & 0 & 0 & 0 & 0\\
0 & 0 & \frac{y_{T2}-y_{A2}}{r_2^2}&
\frac{-(x_{T2}-x_{A2})}{r_2^2}& 0 & 0 & 0 & 0\\
\frac{y_{A2}-y_{A1}}{R^2} & \frac{-(x_{A2}-x_{A1})}{R^2} & \frac{-(y_{A2}-y_{A1})}{R^2} & \frac{x_{A2}-x_{A1}}{R^2} & 0 & 0 & 0 & 0
\end{bmatrix}.\label{eq:jh}
\end{equation}
\end{table*}
The NMPC scheme for the 2T2A problem formulated so far does not give us the answer to the question of whether the targets would be captured or not given the initial positions of the attackers and the targets. In reality, the formulation would give optimal trajectories for the target survival only in a subset of the complete game, where the targets are guaranteed to be successful. In the next section, we derive the conditions that would help us determine the answer to the escape problem. 

\section{Escape region}\label{sec:escape}
Here we formulate the 2T2A problem as a game of kind~\cite{isaacsbook}, where the outcome of target capture or escape could be determined by the initial position of the agents. We use the concept of Apollonius circles to determine the escape region for the targets in the Cartesian plane subject to the following assumptions.
\begin{assumption}\label{assump:attacker_speed}
	The attackers are identical and have equal speed.
\end{assumption} 
\begin{assumption}
	The targets have equal speed and are slower than the attackers. Otherwise, the targets can always evade the attackers.
\end{assumption}
\begin{assumption}\label{assump:straight_line}
    For ease of analysis we assume that all the agents travel in straight  paths.
\end{assumption}
\begin{assumption}
	The attackers are assumed to pursue the targets closer to them. i.e., $ x_{T1}>0 $ and $ x_{T2}<0 $. 
\end{assumption}

To make the analysis easier, we modify the engagement geometry reference frame, as shown in Fig.~\ref{fig:frame}. The $x-$axis is defined as the line joining the coordinates of the attackers, $ A1 $ and$ A2 $, and the $y-$axis is defined as the perpendicular bisector of the line segment $ \overline{A1A2} $, where $ A1,A2,T1,T2 $ represent the positions of attacker-1, attacker-2, target-1, and target-2 respectively.  
\begin{figure}
	\centering
	\includegraphics[width=8cm,height =8cm]{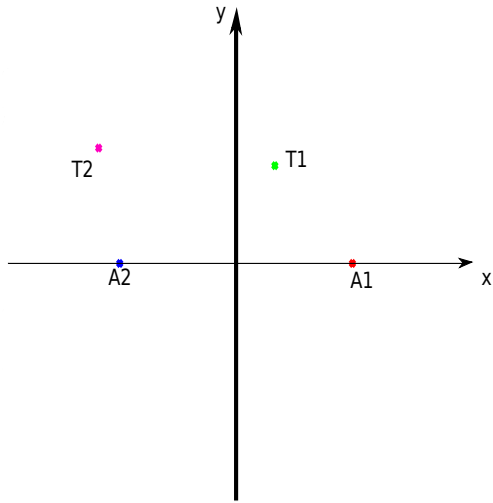}
	\caption{Modified reference frame. }
	\label{fig:frame}
\end{figure} 


\begin{lemma}\label{lemma:y_axis}
	The locus of points where the attackers $ A1$ and $A2 $ can reach simultaneously is represented by the $y-$axis in the modified reference frame.
\end{lemma}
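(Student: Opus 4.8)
The plan is to reduce ``simultaneous reachability'' to equality of Euclidean distances and then invoke the elementary characterization of the perpendicular bisector. By Assumption~\ref{assump:attacker_speed} the two attackers share a common speed $v_A$, and by Assumption~\ref{assump:straight_line} each travels along a straight line. Hence for any candidate point $P$, the time attacker $Ai$ needs to reach $P$ is $t_i = |P-A_i|/v_A$, namely its Euclidean distance to $P$ divided by the common speed. The attackers can therefore meet at $P$ at the same instant if and only if $t_1 = t_2$, which after cancelling $v_A$ is equivalent to the equidistance condition $|P-A_1| = |P-A_2|$.

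First I would fix coordinates matching the modified frame of Fig.~\ref{fig:frame}: place $A_1 = (-c,0)$ and $A_2 = (c,0)$ with $2c = R$ the inter-attacker separation, so that by construction the $x$-axis carries the segment $\overline{A1A2}$ and the $y$-axis is its perpendicular bisector. Writing $P=(x,y)$ and expanding $|P-A_1|^2 = |P-A_2|^2$ collapses to $4cx = 0$; since $c>0$ for distinct attackers, this forces $x=0$. The equidistance locus is therefore exactly the set of points with $x=0$, which is precisely the $y$-axis.

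To finish I would confirm both inclusions so that the locus equals the $y$-axis rather than merely being contained in it: every point on the $y$-axis is equidistant from the two attackers and hence reachable simultaneously by dispatching each attacker straight toward it, while conversely any simultaneously reachable point obeys the equidistance condition and so lies on the $y$-axis.

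I do not expect a genuinely difficult step here; the only place needing care is the equivalence between simultaneous arrival and equidistance, which rests entirely on the equal-speed Assumption~\ref{assump:attacker_speed} together with the straight-line Assumption~\ref{assump:straight_line}. Without the latter the travelled path length could exceed the Euclidean distance, and the clean bisector characterization would break down.
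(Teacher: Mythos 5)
Your proposal is correct and follows essentially the same route as the paper: both arguments rest on the fact that, for equal-speed agents moving along straight lines, the simultaneous-reachability locus is the perpendicular bisector of $\overline{A1A2}$, which coincides with the $y$-axis by construction of the modified frame. The only difference is that the paper simply cites this as a well-known result from Isaacs (invoking Assumptions~\ref{assump:constant_speed}, \ref{assump:attacker_speed}, and \ref{assump:straight_line}), whereas you derive it explicitly via the equidistance computation $|P-A_1|=|P-A_2| \Leftrightarrow x=0$, which is a perfectly sound, self-contained filling-in of the cited fact.
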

\begin{proof}
Based on Assumptions~\ref{assump:constant_speed}, \ref{assump:attacker_speed}, and \ref{assump:straight_line}, it is  well-known  that the locus of points where the two agents can reach simultaneously is the orthogonal bisector of the line segment joining the agents~\cite{isaacsbook}. Since $y-$axis is defined as the orthogonal bisector of the line segment $ \overline{A1A2} $, it is the locus of points where the attackers $ A1$ and $A2 $ can reach simultaneously.
\end{proof}

For finding the escape region for the targets, the 2T2A game is divided into two sub-games containing three agents each. These new sub-games will contain (i) attackers A1--A2 and target T1 and (ii) attackers A1--A2 and target T2.

\begin{lemma}\label{lemma:a1_t1_condition}
	The attacker $A2$ will collide with the attacker $A1$  if the $ A1-T1 $ Apollonius circle intercepts the $ y- $axis.
\end{lemma}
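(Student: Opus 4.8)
The plan is to exploit the characterization from Lemma~\ref{lemma:y_axis}, namely that an $A1$--$A2$ collision can occur only at a point both attackers reach at the same instant, and that the set of all such points is exactly the $y$-axis. It therefore suffices to show that, whenever the $A1$--$T1$ Apollonius circle meets the $y$-axis, the target $T1$ can steer $A1$ onto the $y$-axis at precisely a point that $A2$ also reaches at the same time, so that the two attackers coincide there.

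First I would recall the meaning of the $A1$--$T1$ Apollonius circle. Since $A1$ is faster than $T1$, the circle is the locus of points $X$ with $|XA1|/|XT1| = v_{A1}/v_{T1} > 1$, which is equivalent to $|XA1|/v_{A1} = |XT1|/v_{T1}$. Under constant-speed, straight-line motion (Assumptions~\ref{assump:constant_speed} and \ref{assump:straight_line}), this is exactly the set of points that $A1$ and $T1$ reach simultaneously. Consequently every point $P$ on this circle is a feasible interception point that $T1$ can realize by heading straight for it: if $T1$ aims at $P$, the straight-line interception geometry brings $A1$ to $P$ at the very instant $T1$ arrives.

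Next, let $P$ be a point where the $A1$--$T1$ Apollonius circle intersects the $y$-axis; such a $P$ exists by hypothesis. Let $T1$ head straight for $P$. By the previous step $A1$ arrives at $P$ at time $t_P = |A1P|/v_{A1}$, so $A1$ is on the $y$-axis at time $t_P$. Because $P$ lies on the $y$-axis, which is the perpendicular bisector of $\overline{A1A2}$, we have $|A2P| = |A1P|$; and since the attackers are identical with equal speed (Assumption~\ref{assump:attacker_speed}), $A2$ reaches $P$ at time $|A2P|/v_{A2} = |A1P|/v_{A1} = t_P$. This is exactly the simultaneous-reachability statement of Lemma~\ref{lemma:y_axis} specialized to $P$. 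Hence both attackers occupy $P$ at time $t_P$, i.e., they collide.

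The main obstacle is the coupling in the middle step: justifying rigorously that $T1$'s motion forces $A1$ onto the chosen $y$-axis point $P$, while $A2$ is simultaneously delivered to the same $P$. For $A1$ this follows from the straight-line interception geometry encoded in the Apollonius circle; the delicate part is $A2$, which nominally pursues $T2$ rather than $T1$. Here I would either invoke the symmetric companion argument in which $T2$ analogously steers $A2$ toward the $y$-axis, or appeal to the symmetry of the configuration about the $y$-axis together with the identical-attacker assumption, so that once $A1$ is lured to $P$ the simultaneous-arrival property of Lemma~\ref{lemma:y_axis} guarantees $A2$ reaches $P$ at $t_P$. Closing this coupling cleanly, and thereby reducing the joint collision of the two independently guided attackers to the single geometric condition that the $A1$--$T1$ circle intercepts the $y$-axis, is the crux of the argument.
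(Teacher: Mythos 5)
Your proposal is correct and takes essentially the same approach as the paper: the $A1$--$T1$ Apollonius circle is the locus of points that $A1$ and $T1$ reach simultaneously, the $y$-axis is (by Lemma~\ref{lemma:y_axis}) the locus that $A1$ and $A2$ reach simultaneously, and a collision must therefore occur at a common point of the two loci. The coupling you flag as the crux --- that $A2$ pursues $T2$, so nothing in this lemma forces $A2$ to the particular point $P$ --- is a genuine issue, but the paper's proof of this lemma does not close it either: it merely juxtaposes the two reachability statements and asserts the conclusion (in fact its closing sentence reads ``only if,'' i.e.\ it argues necessity rather than the sufficiency stated in the lemma). In the paper that coupling is resolved only at the level of Theorem~\ref{thrm:escape}, whose condition $Y_1 \cap Y_2 \neq \emptyset$ requires the two Apollonius circles to meet the $y$-axis at a common point, so that $T2$'s symmetric maneuver delivers $A2$ to the same point, at the same time, as $T1$ delivers $A1$. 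Your ``symmetric companion argument'' is precisely this mechanism; so your writeup is, if anything, more candid than the paper's own proof about what this lemma alone does and does not establish.
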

\begin{proof}
The Apollonius circle for the $A1-T1$ engagement is constructed using the center and radius given by
\begin{equation}
	\left( x_c,y_c\right) _{A1T1} = \left(  \frac{x_{T1}-\gamma_{A1T1}^2x_{A1}}{1-\gamma_{A1T1}^2},\frac{y_{T1}}{1-\gamma_{A1T1}^2}\right), \label{c_A1T1}
\end{equation}
and 
\begin{equation}
	r_{A1T1} = \frac{\gamma_{A1T1}\sqrt{ \left( x_{T1}-x_{A1}\right)^2 +y_{T1}^2}}{1-\gamma_{A1T1}^2}, \label{r_A1T1}
\end{equation}
	where $ \gamma_{A1T1} $ is the speed ratio defined by 
	\begin{equation}
		\gamma_{A1T1} = \frac{v_{T1}}{v_{A1}}.
	\end{equation}
The $A1-T1$ circle represents the points at which $A1$ and $T1$ can reach simultaneously. Since the $ y- $axis represents the points at which $A1$ and $A2$ can reach simultaneously according to Lemma~\ref{lemma:y_axis}, the attacker $A2$ will collide with the attacker $A1$ only if the $ A1-T1 $ Apollonius circle intercepts the $ y- $axis.
\end{proof}


\begin{figure}
	\centering 
	\begin{subfigure}{0.5\textwidth}
		\includegraphics[width=\linewidth]{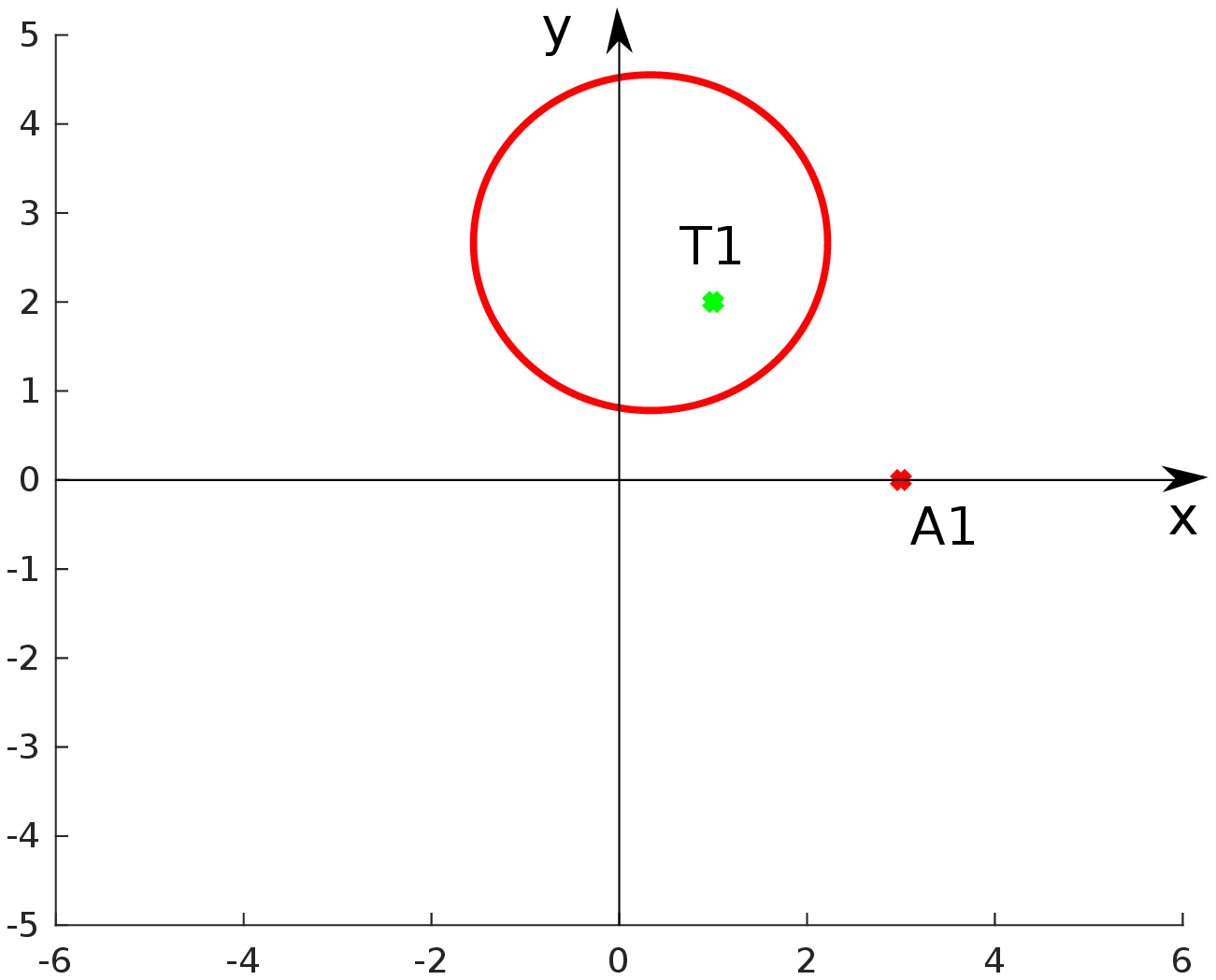}
		\caption{}
		\label{fig:A1T1}
	\end{subfigure}\hfil 
	\begin{subfigure}{0.5\textwidth}
		\includegraphics[width=\linewidth]{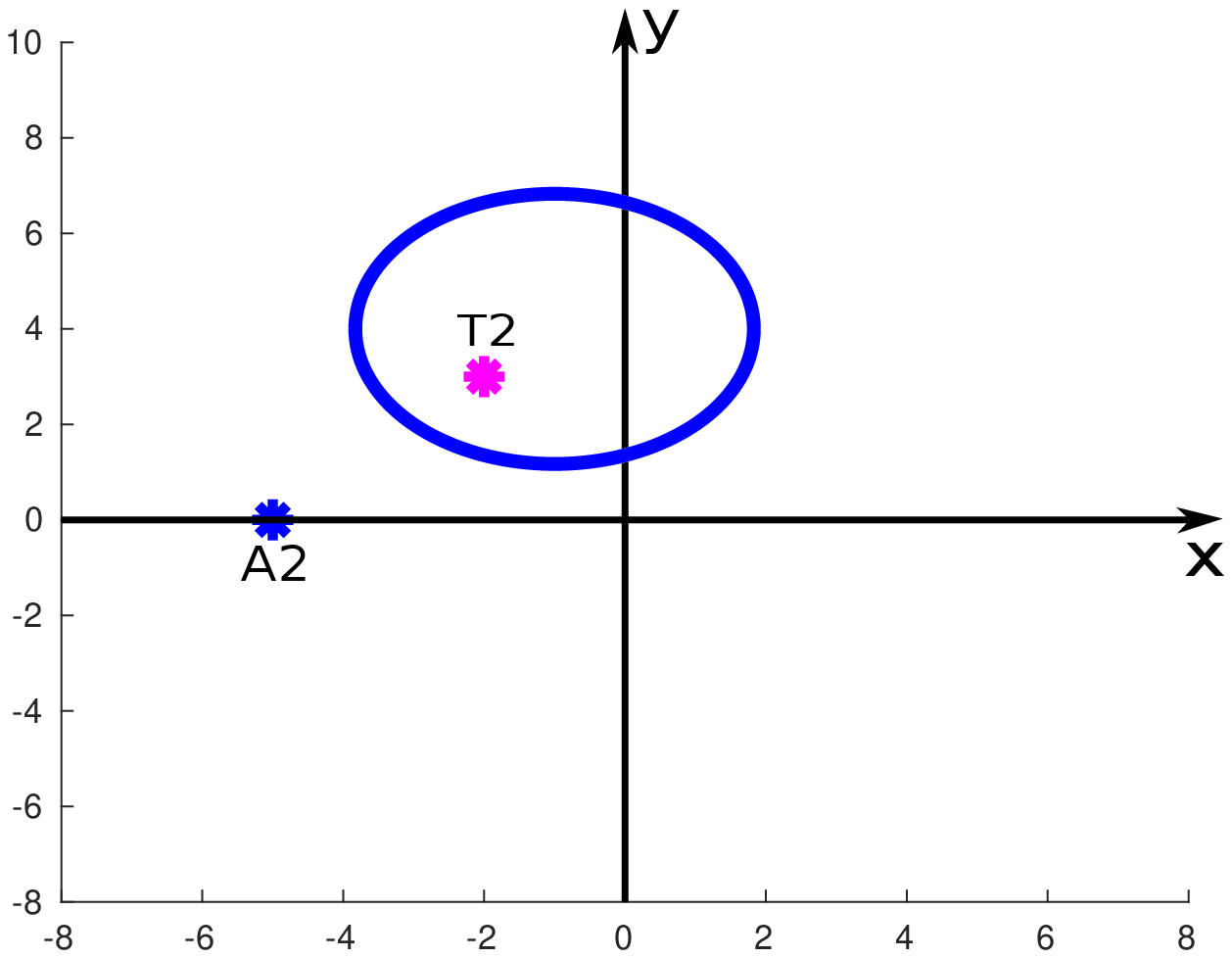}		
		\caption{}
		\label{fig:A2T2}
	\end{subfigure}\hfil 
	\caption{Apollonius circles for the attacker-target engagements. (a) $ A1-T1 $ engagement. (b) $ A2-T2 $ engagement.}
	\label{fig:Apollonis_circles}
\end{figure} 

\begin{lemma}\label{lemma:a2_t2_condition}
	The attacker $A1$ will collide with the attacker $A2$ only if the $ A2-T2 $ Apollonius circle intercepts the $ y- $axis.
\end{lemma}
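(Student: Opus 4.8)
The plan is to mirror the argument of Lemma~\ref{lemma:a1_t1_condition}, exploiting the symmetry of the two attacker--target sub-games in the modified reference frame. First I would construct the Apollonius circle for the $A2-T2$ engagement. Since the attacker $A2$ is faster than the target $T2$, the speed ratio $\gamma_{A2T2} = v_{T2}/v_{A2}$ satisfies $0 < \gamma_{A2T2} < 1$, so the locus of points that $A2$ and $T2$ can reach simultaneously is a circle rather than a straight line. Equating the travel time of $A2$ to that of $T2$ for an arbitrary interception point and squaring yields, after collecting terms, a circle with center $\left( \frac{x_{T2} - \gamma_{A2T2}^2 x_{A2}}{1 - \gamma_{A2T2}^2}, \frac{y_{T2}}{1 - \gamma_{A2T2}^2} \right)$ and radius $\frac{\gamma_{A2T2}\sqrt{(x_{T2} - x_{A2})^2 + y_{T2}^2}}{1 - \gamma_{A2T2}^2}$, i.e.\ exactly the expressions~\eqref{c_A1T1} and~\eqref{r_A1T1} under the symmetric relabeling $(A1,T1) \mapsto (A2,T2)$.

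Next I would combine the two reachability loci. The $A2-T2$ circle is by construction the set of points at which $A2$ and $T2$ can arrive simultaneously, while by Lemma~\ref{lemma:y_axis} the $y-$axis is the set of points at which $A1$ and $A2$ can arrive simultaneously. For $A1$ to be lured into $A2$, the attacker $A2$ must be steered to a point where $A1$ can reach it at the same instant, which can only happen on the $y-$axis; but $A2$ is simultaneously constrained to head toward a point of its own Apollonius circle in order to intercept $T2$. Consequently a necessary condition for $A1$ to collide with $A2$ is that these two loci intersect, i.e.\ that the $A2-T2$ Apollonius circle intercepts the $y-$axis. This establishes the ``only if'' direction asserted by the lemma.

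I do not expect a genuine obstacle, since the statement is the reflection of Lemma~\ref{lemma:a1_t1_condition} under $x \mapsto -x$ with the roles of the two attacker--target pairs exchanged. The only care needed is bookkeeping: keeping the sign conventions consistent with the modified frame (here $x_{A2} < 0$ and $x_{T2} < 0$), verifying that $1 - \gamma_{A2T2}^2 > 0$ so that the circle is well defined and the center formula does not flip sign, and stating the implication as necessity (``only if'') to match the lemma. I would therefore give the $A2-T2$ circle parameters briefly and close with the same simultaneous-reachability argument used in the proof of Lemma~\ref{lemma:a1_t1_condition}.
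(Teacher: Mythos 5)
Your proposal matches the paper's own proof essentially step for step: you construct the $A2$--$T2$ Apollonius circle with the same center and radius expressions (the paper's equations~\eqref{c_A2T2} and~\eqref{r_A2T2}), invoke Lemma~\ref{lemma:y_axis} to identify the $y$-axis as the locus of simultaneous arrival for $A1$ and $A2$, and conclude that intersection of the two loci is necessary for the collision. The additional bookkeeping you flag (sign conventions, $1-\gamma_{A2T2}^2>0$) is sound but not something the paper dwells on either; no gap here.
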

\begin{proof}
	The Apollonius circle for the $A2-T2$ engagement is constructed using the center and radius given by
	\begin{equation}
		\left( x_c,y_c\right) _{A2T2} = \left(  \frac{x_{T2}-\gamma_{A2T2}^2x_{A2}}{1-\gamma_{A2T2}^2},\frac{y_{T2}}{1-\gamma_{A2T2}^2}\right), \label{c_A2T2}
	\end{equation}
	and
	\begin{equation}
		r_{A2T2} = \frac{\gamma_{A2T2}\sqrt{ \left( x_{T2}-x_{A2}\right)^2 +y_{T2}^2}}{1-\gamma_{A2T2}^2}, \label{r_A2T2}
	\end{equation}
		where $ \gamma_{A2T2} $ is the speed ratio defined by 
	\begin{equation}
		\gamma_{A2T2} = \frac{v_{T2}}{v_{A2}}.
	\end{equation}
The $A2-T2$ circle represents the points at which $A2$ and $T2$ can reach simultaneously. Similarly, according to Lemma~\ref{lemma:y_axis}, the $y-$axis represents the points at which $A1$ and $A2$ can reach simultaneously. Hence, the attacker $A1$ will collide with the attacker $A2$ only if the $ A2-T2 $ circle intercepts the $ y- $axis.
\end{proof} 

The Fig.~\ref{fig:A1T1} represents an example Apollonius circle for the $ A1-T1 $ engagement, and Fig.~\ref{fig:A2T2} for the $ A2-T2 $ engagement. Now, let us state the theorem for finding the target escape region for the 2T2A game.

\begin{theorem}\label{thrm:escape}
The escape region for the targets exist if and only if the two Apollonius circles defined by $ A1-T1 $ and $ A2-T2 $ engagement have a common interception point with the $y-$axis. i.e., the target pair will be able to successfully lure the attackers into a collision if and only if the following conditions are satisfied:
\begin{enumerate}
	\item $Y_1 \cap Y \neq \emptyset $,
	\item $ Y_2 \cap Y \neq \emptyset $,
	\item $ Y_1 \cap Y_2 \neq \emptyset $,
\end{enumerate} 
where $ Y $ is the set of all points on the $y-$axis, $ Y_1 $ is the set of all points on the $ A1-T1 $ Apollonius circle, $ Y_2 $ is the set of all points on the $ A2-T2 $ Apollonius circle, and $ \emptyset $ is the null set.	
\end{theorem}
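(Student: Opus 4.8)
The plan is to prove the two implications of the equivalence separately, using Lemmas~\ref{lemma:y_axis}--\ref{lemma:a2_t2_condition} to convert the kinematic statement ``the targets lure the attackers into a collision'' into the stated set conditions. I would work throughout in the modified frame of Fig.~\ref{fig:frame}, in which $Y$ is the line $x=0$ and, by Assumptions~\ref{assump:constant_speed}, \ref{assump:attacker_speed} and~\ref{assump:straight_line}, every agent moves on a straight segment at constant speed, so the time for an agent to reach a point $P$ is its Euclidean distance to $P$ divided by its speed. The interpretation I would fix is that a target lures its attacker by heading straight for a point on its own Apollonius circle, so that attacker and target reach that point simultaneously, exactly as used in the proofs of Lemmas~\ref{lemma:a1_t1_condition} and~\ref{lemma:a2_t2_condition}.

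For necessity, suppose the targets succeed, so that at some instant $t^\star$ the two attackers occupy a common point $P$. Because $v_{A1}=v_{A2}$ and both attackers travel straight, $P$ is reached by $A1$ and $A2$ at the same time, hence $P\in Y$ by Lemma~\ref{lemma:y_axis}. Since $A1$ was diverted to $P$ while pursuing $T1$, the pair $(A1,T1)$ reach $P$ simultaneously, i.e. $P\in Y_1$; the symmetric argument gives $P\in Y_2$. Thus $P\in Y_1\cap Y_2\cap Y$, and conditions (1)--(3) follow at once.

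For sufficiency I would run this in reverse. The decisive step is to show that (1)--(3) force a single point $P^\star\in Y_1\cap Y_2\cap Y$; granting this, I steer $T1$ and $T2$ straight at $P^\star$, whereupon all four agents arrive at $P^\star$ at the common time $t^\star=|P^\star-A1|/v_{A1}=|P^\star-T1|/v_{T1}$ and the attackers collide. To produce $P^\star$ I would substitute $x=0$ into~\eqref{c_A1T1}--\eqref{r_A1T1} and~\eqref{c_A2T2}--\eqref{r_A2T2}: condition~(1) reads $r_{A1T1}\ge |(x_c)_{A1T1}|$ and cuts a chord $[b_1,a_1]$ of $Y_1$ out of the $y$-axis, condition~(2) similarly cuts a chord $[b_2,a_2]$ of $Y_2$, and I would combine condition~(3) with the fact that the two centers lie in opposite half-planes (one in $x>0$, one in $x<0$, each circle reaching across $Y$) to argue that these chords must share an endpoint, i.e. $\{a_1,b_1\}\cap\{a_2,b_2\}\neq\emptyset$.

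The main obstacle is precisely this last inference, and I expect it to carry the real content of the theorem. Three pairwise-nonempty intersections do not in general force a common triple intersection: two circles can each meet a line and also meet each other while both of their mutual intersection points lie off the line, so (1)--(3) are a priori weaker than the existence of a point in $Y_1\cap Y_2\cap Y$. Writing $p_i(y)=(x_c)_{AiTi}^2+(y-(y_c)_{AiTi})^2-r_{AiTi}^2$ for the power of $(0,y)$ with respect to each circle, conditions~(1)--(2) say each chord $\{y:p_i(y)\le 0\}$ is nonempty and the sought point is a common root $p_1(y)=p_2(y)=0$; I would attempt to obtain it by a continuity argument exploiting the opposite placement of the centers, which constrains where the two chords sit on $Y$. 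Should this not go through unconditionally, the faithful reading is that condition~(3) must be sharpened to require the circles to intersect \emph{on} $Y$ (equivalently, that the two chords share an endpoint), which is exactly what ``common interception point with the $y$-axis'' demands; isolating the precise hypothesis under which (1)--(3) become equivalent to the existence of $P^\star$ is the step I would treat most carefully.
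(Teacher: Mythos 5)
Your proposal follows the same route as the paper's own proof: both reduce the theorem to Lemmas~\ref{lemma:y_axis}--\ref{lemma:a2_t2_condition}, identify a collision point as one that both attackers (and their respective targets) reach simultaneously, and hence characterize escape by the existence of a point in $Y_1\cap Y_2\cap Y$. Your necessity argument is the paper's argument, stated more carefully. The difference lies in sufficiency, and there your suspicion is exactly right --- and it is not resolved by the paper. The paper's proof asserts that the circles ``should intercept each other on the $y$-axis'' and then treats the failure of this as equivalent to $Y_1\cap Y_2=\emptyset$; that is, it silently identifies the conjunction of conditions (1)--(3) with the existence of a common point of $Y_1\cap Y_2\cap Y$. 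That identification is false, so the published proof contains precisely the inference you declined to make.

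Concretely, take $A1=(1,0)$, $A2=(-1,0)$, $\gamma_{A1T1}=\gamma_{A2T2}=1/2$, $T1=(0.5,\,1)$, $T2=(-0.5,\,1.7)$; this satisfies every standing assumption ($x_{T1}>0>x_{T2}$, each target closer to its own attacker, targets slower). From \eqref{c_A1T1}--\eqref{r_A1T1}, $Y_1$ has center $(1/3,\,4/3)$ and radius $\sqrt{5}/3$, so its $y$-intercepts are $\{2/3,\,2\}$; from \eqref{c_A2T2}--\eqref{r_A2T2}, $Y_2$ has center $(-1/3,\,34/15)$ and radius $\sqrt{314}/15$, so its $y$-intercepts are $\{17/15,\,17/5\}$. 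The distance between the centers, $\sqrt{296}/15\approx 1.15$, lies strictly between the difference ($\approx 0.44$) and the sum ($\approx 1.93$) of the radii, so $Y_1\cap Y_2\neq\emptyset$. Thus conditions (1)--(3) all hold, yet the two intercept sets are disjoint, so $Y_1\cap Y_2\cap Y=\emptyset$ and no point $P^\star$ at which all four agents arrive simultaneously exists: neither your construction nor the paper's claimed collision can be carried out, and no continuity argument can manufacture one. The faithful conclusion is the one you reached in your last paragraph: the theorem is correct only with condition (3) sharpened to $Y_1\cap Y_2\cap Y\neq\emptyset$, equivalently $\{\overline{y}_1,\underline{y}_1\}\cap\{\overline{y}_2,\underline{y}_2\}\neq\emptyset$, which is what the theorem's prose (``common interception point with the $y$-axis'') actually says and what makes both of your directions go through. (A separate modeling question, which neither you nor the paper addresses, is whether the attackers could instead collide \emph{en route}, before reaching their Apollonius circles; under that weaker notion of collision the analysis changes again, but within the simultaneous-arrival model used by the three lemmas, your diagnosis stands.) In short, your attempt is at least as rigorous as the published proof, and it correctly locates the flaw in it.
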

\begin{proof}
    We prove the theorem using  Lemma~\ref{lemma:y_axis}, \ref{lemma:a1_t1_condition}, and \ref{lemma:a2_t2_condition}.  According to Lemma~\ref{lemma:y_axis} and \ref{lemma:a1_t1_condition}, the $A1-T1$ Apollonius circle should intercept the $y-$axis for $A2$ to intercept $A1$. If $ Y_1 \cap Y = \emptyset $, the $ A1-T1 $ Apollonius circle will not intersect the $y-$axis, and hence the target $T1$ will be captured before it can cross the $y-$axis and escape. Therefore $1) Y_1 \cap Y \neq \emptyset $ must be satisfied.
	
	From Lemma~\ref{lemma:y_axis} and \ref{lemma:a2_t2_condition}, for $T2$ to escape, the $A2-T2$ circle should intercept the $y-$axis so that $A1$ can capture $A2$. This condition will not be satisfied if $ Y_2 \cap Y = \emptyset $. If any one of the targets is captured, eventually, the other one will also be captured. Therefore the condition $ 2) Y_2 \cap Y \neq \emptyset $ must be satisfied.
	
	Since each target depends on the other attacker for survival, the $A1-T1$ and $A2-T2$ circles should intercept each other on the $y-$ axis for $A1-A2$ collision. Otherwise, if $ Y_1 \cap Y_2 = \emptyset $, even if conditions 1 and 2 are satisfied, $ A1 $ and $ A2 $ will cross the $ y-$axis at different points, and hence they will not collide. Therefore, for the attackers to collide and the targets to escape, conditions 1, 2, and 3 should be simultaneously satisfied. 
\end{proof}

\begin{figure}
	\centering
	\includegraphics[width=10cm]{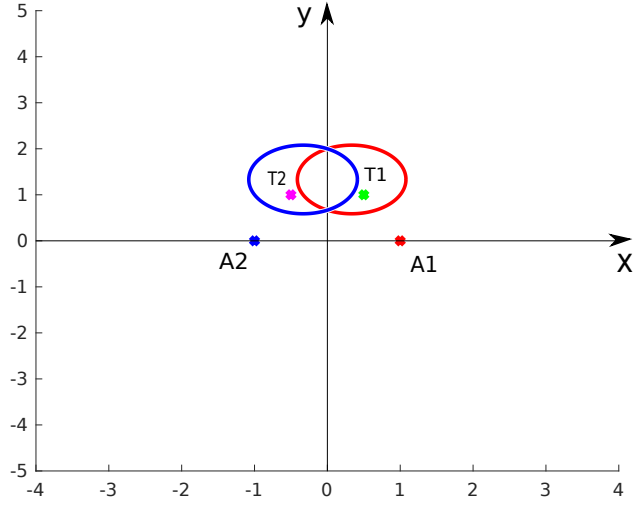}
	\caption{Apollonius circles for the 2T2A game.}
	\label{fig:A1A2T1T2}
\end{figure} 
From Fig.~\ref{fig:A1A2T1T2}, the equation of the $A1-T1$ Apollonius circle can be written as
\begin{equation}
	\left( x- \left( \frac{x_{T1}-\gamma_{A1T1}^2x_{A1}}{1-\gamma_{A1T1}^2} \right)\right)^2 + \left( y - \left( \frac{y_{T1}-\gamma_{A1T1}^2y_{A1}}{1-\gamma_{A1T1}^2} \right) \right)^2 = \left( \frac{\gamma_{A1T1}\sqrt{ \left( x_{T1}-x_{A1}\right)^2 +\left( y_{T1}-y_{A1}\right)^2}}{1-\gamma_{A1T1}^2} \right)^2.
\end{equation} 
Since $A1$ lies on the $x-$axis, $y_{A1}=0$. For finding the intersection points of $A1-T1$ circle with the $y-$axis, we put $x=0$, and the following expression is obtained.
\begin{eqnarray}
	&y^2-2y\left( \frac{y_{T1}}{1-\gamma_{A1T1}^2}\right) +\left( \frac{x_{T1}-\gamma_{A1T1}^2x_{A1}}{1-\gamma_{A1T1}^2}\right) ^2 +\left( \frac{y_{T1}}{1-\gamma_{A1T1}^2}\right) ^2 \nonumber\\
	&-\frac{\gamma_{A1T1}^2\left( \left( x_{T1}-x_{A1}\right) ^2+y_{T1}^2\right) }{\left( 1-\gamma_{A1T1}^2\right) ^2} = 0. 
	\label{eqn:intercept}
\end{eqnarray}
Interception points of $ A1-T1 $ Apollonius circle with the $y-$axis are the solutions of this quadratic equation~(\ref{eqn:intercept}) and are represented by $ \overline{y}_1 $ and $ \underline{y}_1 $. Similarly, the interception points of $ A2-T2 $ Apollonius circle with the $y-$axis can be represented using $ \overline{y}_2 $ and $ \underline{y}_2 $. The escape region for the targets can be mapped by verifying the conditions given in Theorem~\ref{thrm:escape} by analyzing the positions of $y-$intercepts: $ \overline{y}_1,  \underline{y}_1, \overline{y}_2,$ and $ \underline{y}_2 $. 

The escape region of the targets for an example initial configuration is given in Fig.~\ref{fig:escape_region_T1T2}. The initial positions of the agents were selected as $ x_{A1}=100,~y_{A1}=0,~x_{A2}=-100,~y_{A2}=0,~x_{T1}=150,~y_{T1}=600 $, and $\gamma_{A1T1}=\gamma_{A2T2}=~0.5$. The initial position of $T2$ is varied between $-1000<x_{T2}<0$ and $0<y_{T2}<2500 $. At each selected initial point of $T2$, the conditions stated in Theorem~\ref{thrm:escape} is checked by comparing the $y-$intercepts: $ \overline{y}_1,  \underline{y}_1, \overline{y}_2,$ and $ \underline{y}_2 $. The initial points of $T2$ which satisfy the conditions are marked with black color. Finally, after completion of this map, we will be able to tell that if the target $T2$'s initial position lies in the escape region marked by black color, then the targets would escape, and if it lies outside of the escape region, the targets would be captured.  

\begin{figure}
	\centering
	\includegraphics[width = 10cm]{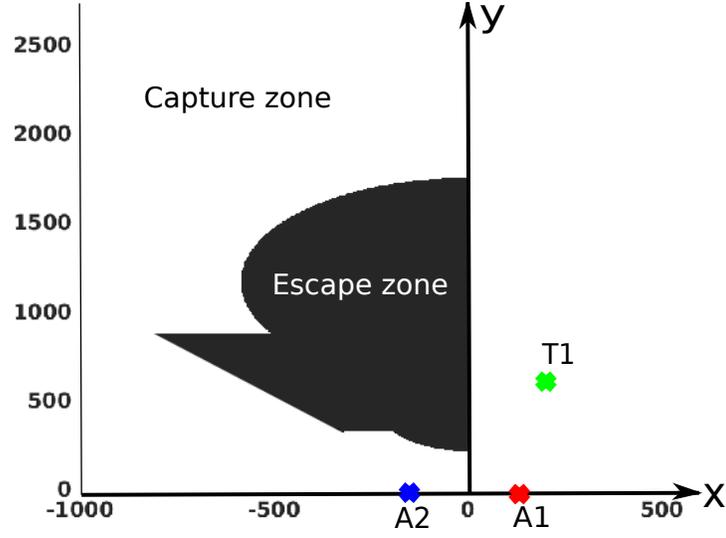}
	\caption{Escape region for the targets. }
	\label{fig:escape_region_T1T2}
\end{figure}

\section{Results and Discussion}\label{sec:results}
The performance of the proposed NMPC formulation was evaluated through numerical simulations. Initially, we will describe the simulation setup followed by a general example. Then two examples using the theoretical analysis given in Section~\ref{sec:escape} are presented to show the attackers colliding and missing.

\subsection{Simulation setting}
The simulations were carried out using CasADi-Python~\cite{Andersson2019}. The horizon, $ \tau_{h} $ is $2.5$\,s for all simulations (prediction window of 50 steps with a sampling time of $0.05$\,s). The state covariance matrix $ Q $ and the measurement covariance matrix $ \Sigma $ for the EKF are selected as
\begin{equation}
Q=\begin{bmatrix}
0.1 & 0 & 0 & 0 & 0 &0 & 0 & 0 \\
0 & 0.1 & 0 & 0 & 0 &0 & 0 & 0\\
0 & 0 & 0.1 & 0 & 0 &0 & 0 & 0\\
0 & 0 & 0 & 0.1 & 0 &0 & 0 & 0\\
0 & 0 & 0 & 0 & 0.01 &0 & 0 & 0\\
0 & 0 & 0 & 0 & 0 &0.01 & 0 & 0\\
0 & 0 & 0 & 0 & 0 & 0 & 0.1 & 0\\
0 & 0 & 0 & 0 & 0 & 0 & 0 & 0.1\\  
\end{bmatrix},
\end{equation}
\begin{equation}
\Sigma=\begin{bmatrix}
0.1 & 0 & 0 & 0 & 0 & 0 \\
0 & 0.1 & 0 & 0 & 0 & 0 \\
0 & 0 & 0.1 & 0 & 0 & 0 \\
0 & 0 & 0 & 0.01 & 0 & 0 \\
0 & 0 & 0 & 0 & 0.01 & 0 \\
0 & 0 & 0 & 0 & 0 & 0.01 
\end{bmatrix}.
\end{equation} 
The capture radii are taken as $ R_{c}=1$\,m, $ r_{1_{c}}=1 $\,m, and $r_{2_{c}}=1$\,m. The weights were selected as $w_1 = 30,w_2=w_3=60,w_4=w_5=1$.  The angular velocities are constrained to $-0.785\leq \{\dot{\alpha}_{T1},\dot{\alpha}_{T2}\} \leq 0.785$\,rad/s due to the practical considerations on the turn rate of the agents. The velocities of the agents are taken as $ v_{A1},v_{A2}=60 $\,m/s and $ v_{T1},v_{T2}=30 $\, m/s. 


\begin{figure}
	\centering 
	\begin{subfigure}{0.5\textwidth}
		\includegraphics[width=\linewidth]{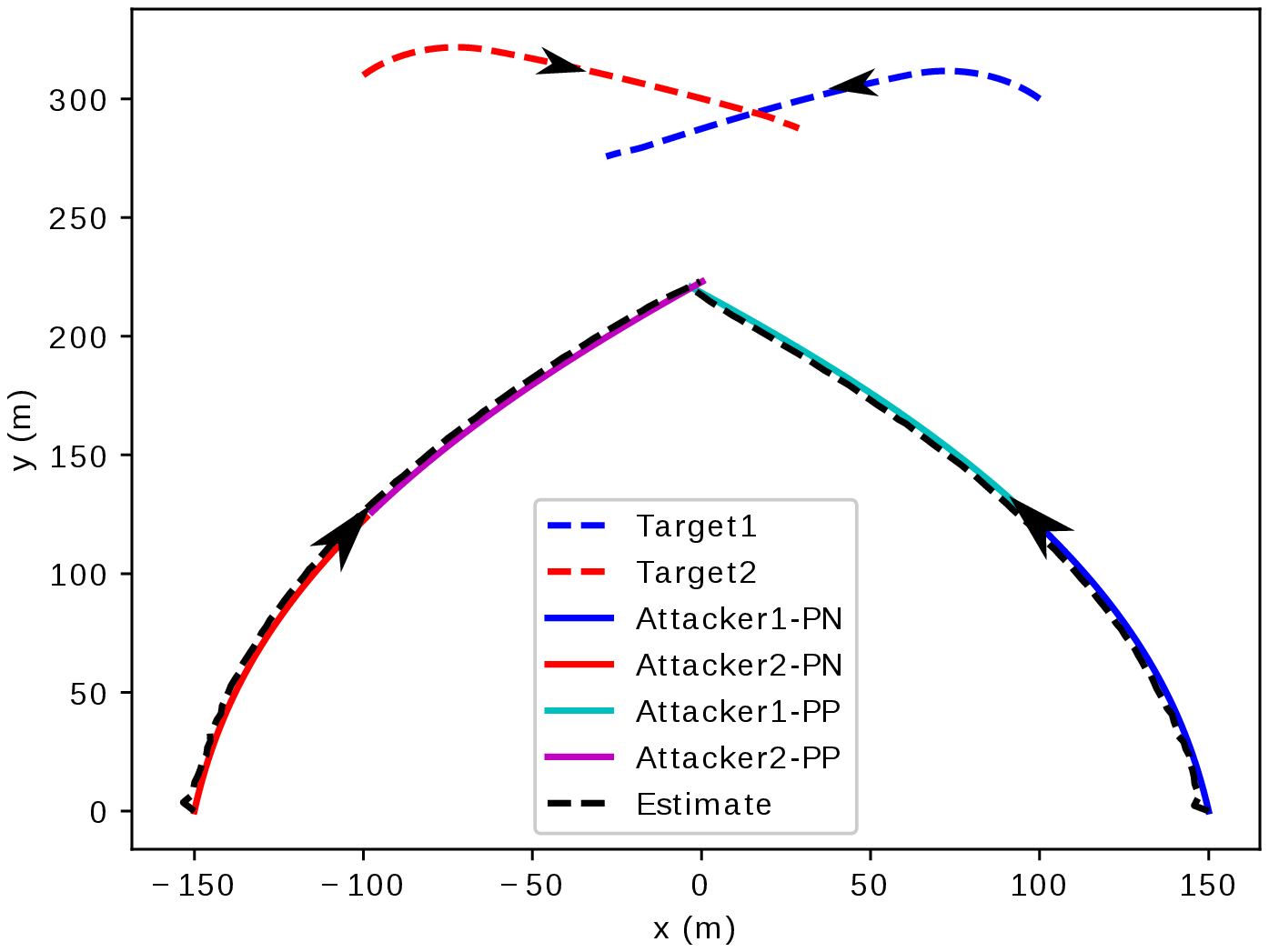}
		\caption{}
		\label{fig:j3_traj}
	\end{subfigure}\hfil 
	\begin{subfigure}{0.45\textwidth}
		\includegraphics[width=\linewidth]{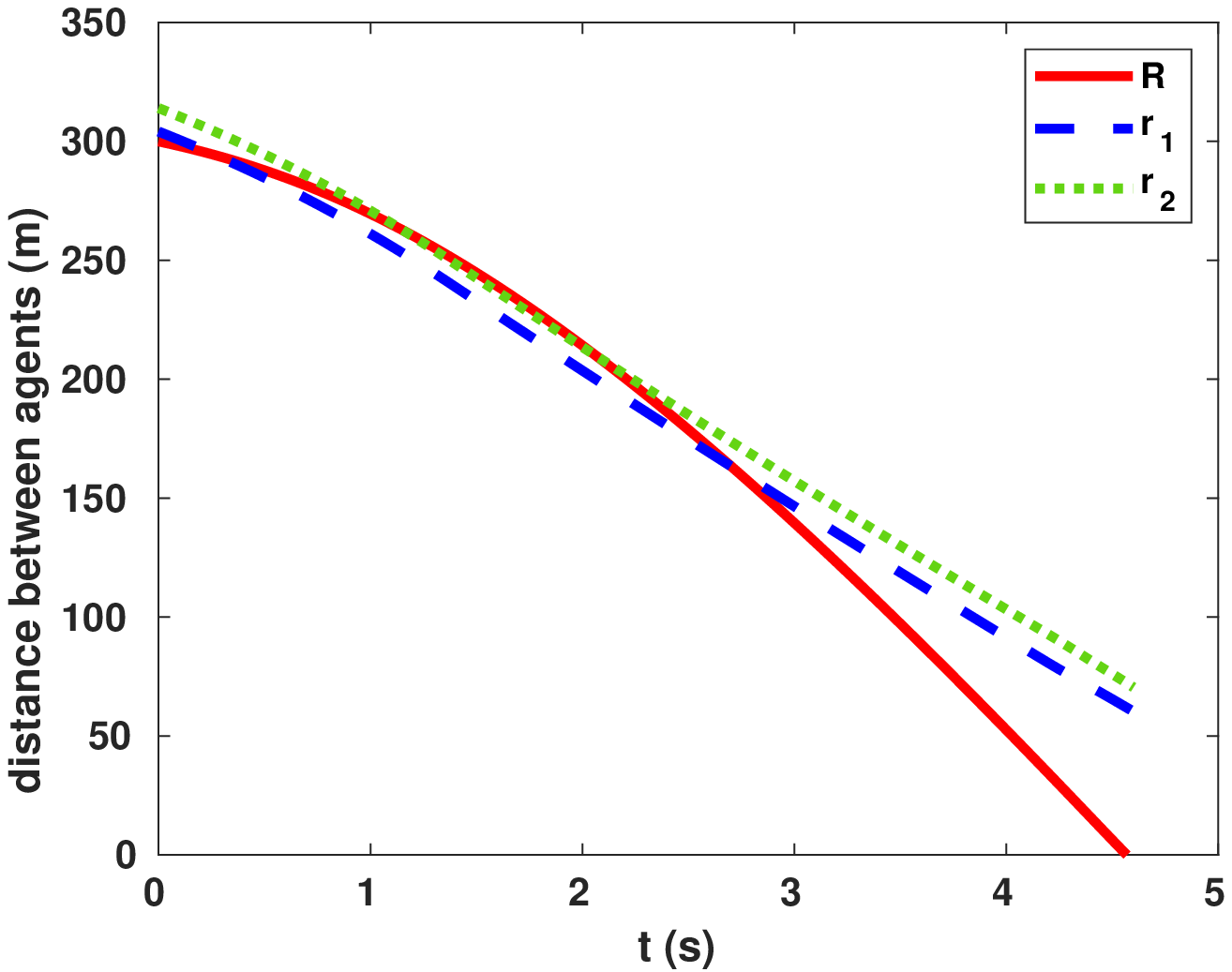}		
		\caption{}
		\label{fig:j3_R}
	\end{subfigure}\hfil 
	\caption{An example scenario of the four-agent game. (a) The trajectories of the two targets and the two attackers. (b) Distance between different agents.}
	\label{fig:example_sim}
\end{figure} 

\subsection{A general example scenario}
We consider an engagement scenario as shown in Fig.~\ref{fig:j3_traj}. The figure shows the agent trajectories for an initial configuration of $ x_{T1}=100,y_{T1}=300,x_{T2}=-100,y_{T2}=310,x_{A1}=150,y_{A1}=0,x_{A2}=-150,y_{A2}=0 $. In the simulation, the attackers switch between proportional navigation (PN) guidance law and pure pursuit (PP) guidance. This is done in order to show the efficacy of the NMPC scheme against unknown attacker guidance laws. The attackers switch their guidance law between PN and PP every 2.5s. The PP guidance law is given as  $	a_{A1}=-\kappa\left( \alpha_{A1}-\theta_{A1T1}\right), a_{A2}=-\kappa\left( \alpha_{A2}-\theta_{A2T2}\right) $ \cite{siouris2004missile} 
and the PN guidance is given as    
$	a_{A1}=Nv_{A1}\dot{\theta}_{A1T1},a_{A2}=Nv_{A2}\dot{\theta}_{A2T2}$ \cite{c13}. 
The navigation constant, $N$ = 3, is taken for the PN guidance law and $\kappa$ = 2 for the PP law. The target pair was able to lure the attackers into a collision successfully. The evolution of distances between the agents is shown in Fig.~\ref{fig:j3_R}. The distance between the attackers went to zero, confirming the collision. Control input profiles of the targets are given in Fig.~\ref{fig:j3_control}. It can be seen that the bounds on the inputs were strictly followed. The estimator performance is shown in Fig.~\ref{fig:j3_error}. The estimation errors in all the attacker states are very low and stay within the 3$ \sigma $ bounds. Hence in the following examples, we do not show the estimation errors. 

\begin{figure}
	\centering
	\includegraphics[width=8cm]{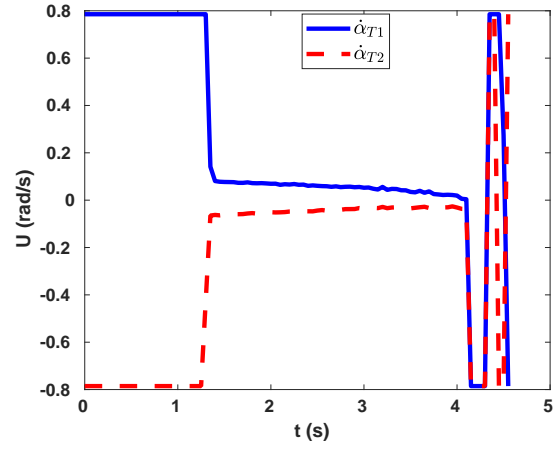}
	\caption{Control profile of the target pair. }
	\label{fig:j3_control}
\end{figure}     

\begin{figure*}
	\centering
	\includegraphics[width = 16cm]{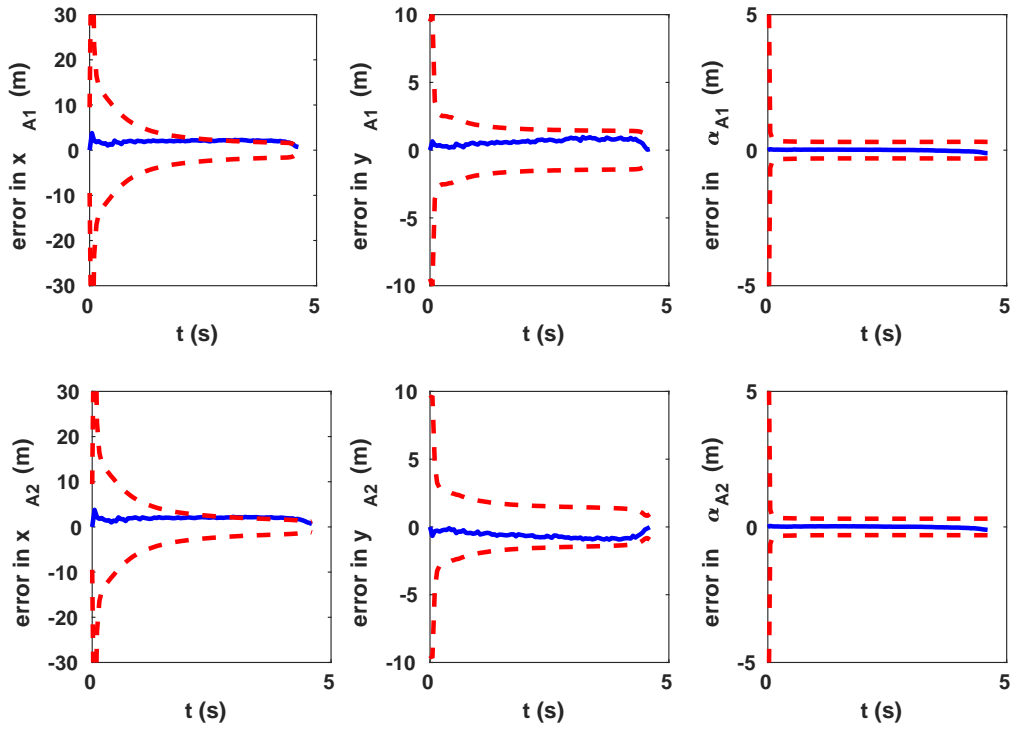}
	\caption{Estimation errors of the attacker states. }
	\label{fig:j3_error}
\end{figure*}

\subsection{Examples to validate the theoretical analysis}
We now show two examples based on the escape region determined in Fig.~\ref{fig:escape_region_T1T2}. We fix all the positions for the attackers and the target $T1$, while the target T2 position is changed in the simulations. 
First, the initial position of $ T2 $ is selected  inside the escape region as $ x_{T2}=-100,y_{T2}=1000 $ and is represented by $T2_e$ in Fig.~\ref{fig:escape_map_T2}. The agent trajectories for this example are shown in Fig.~\ref{fig:sim6_traj}, where we can see the attackers colliding. Fig.~\ref{fig:sim6_R} shows the distance ($R$) between the attackers converging to zero. The attackers collide with each other, and the targets escaped as expected. 

Next, the initial position of the target $ T2 $ is considered outside the escape region as $ x_{T2}=-1000,y_{T2}=2000 $ and is represented by $T2_c$ in Fig.~\ref{fig:escape_map_T2}. Under this initial condition, Fig.~\ref{fig:sim5_traj} shows that one of the targets was captured, and as the engagement continues, the other target will also be captured. Fig.~\ref{fig:sim5_R} shows the distance between the agents. $r_1$ goes to zero, confirming the capture of $T1$ by $A1$. 

\begin{figure}
	\centering
	\includegraphics[width=9cm]{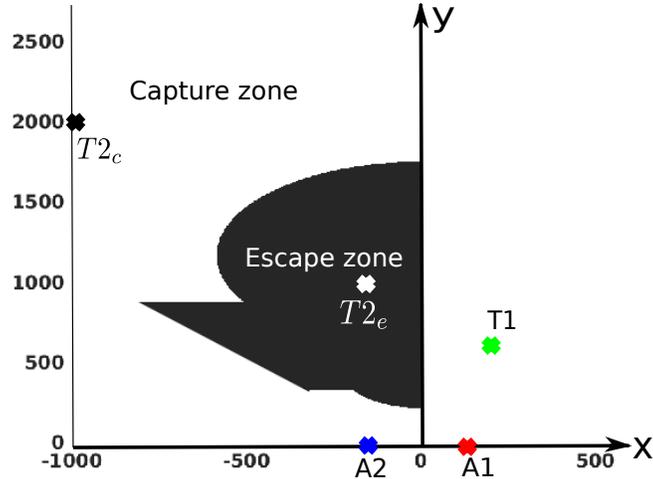}
	\caption{Initial agent configurations in the escape map. }
	\label{fig:escape_map_T2}
\end{figure}    


\begin{figure}
	\centering 
	\begin{subfigure}{0.5\textwidth}
		\includegraphics[width=\linewidth]{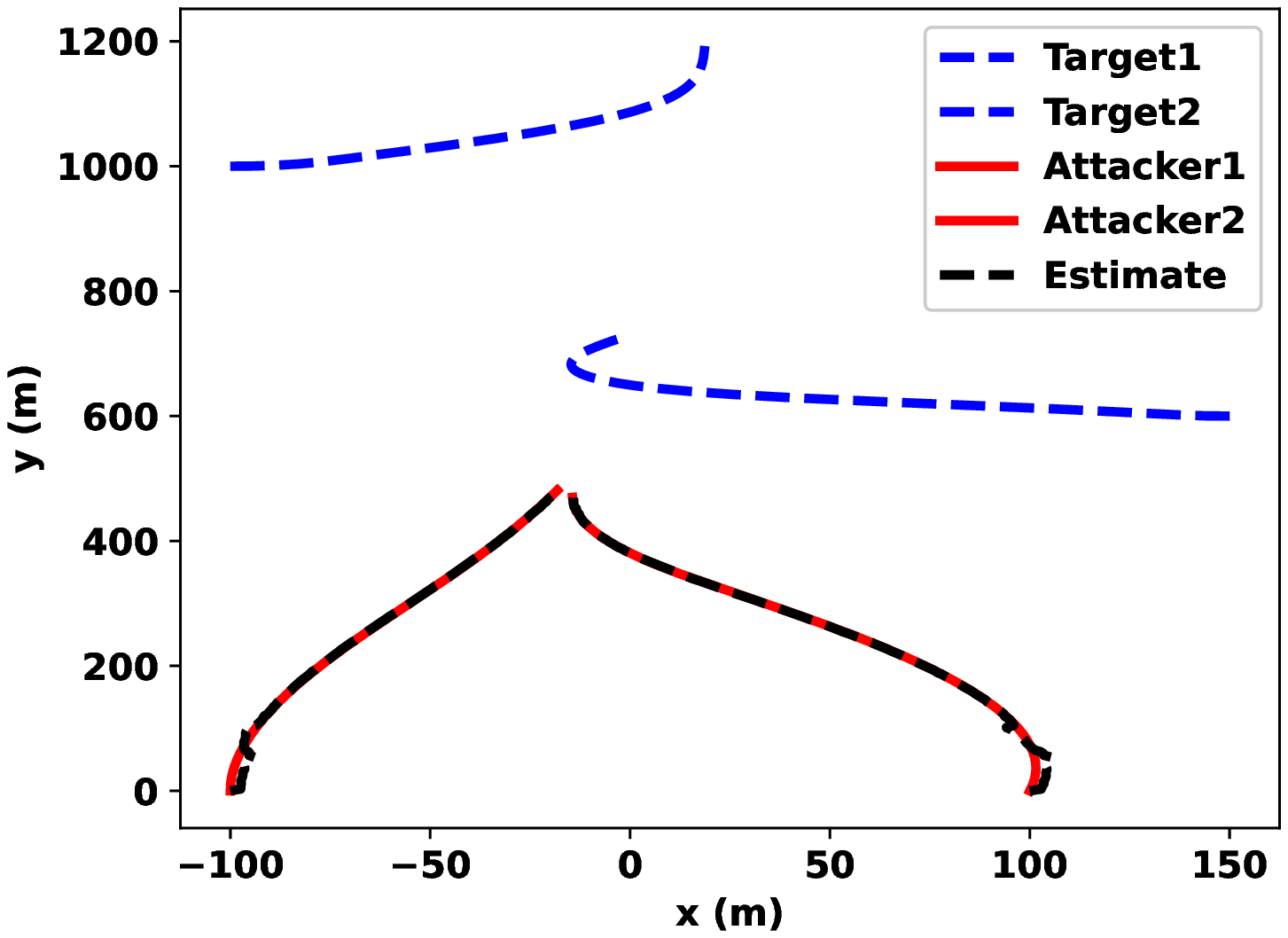}
		\caption{}
		\label{fig:sim6_traj}
	\end{subfigure}\hfil 
	\begin{subfigure}{0.5\textwidth}
		\includegraphics[width=\linewidth]{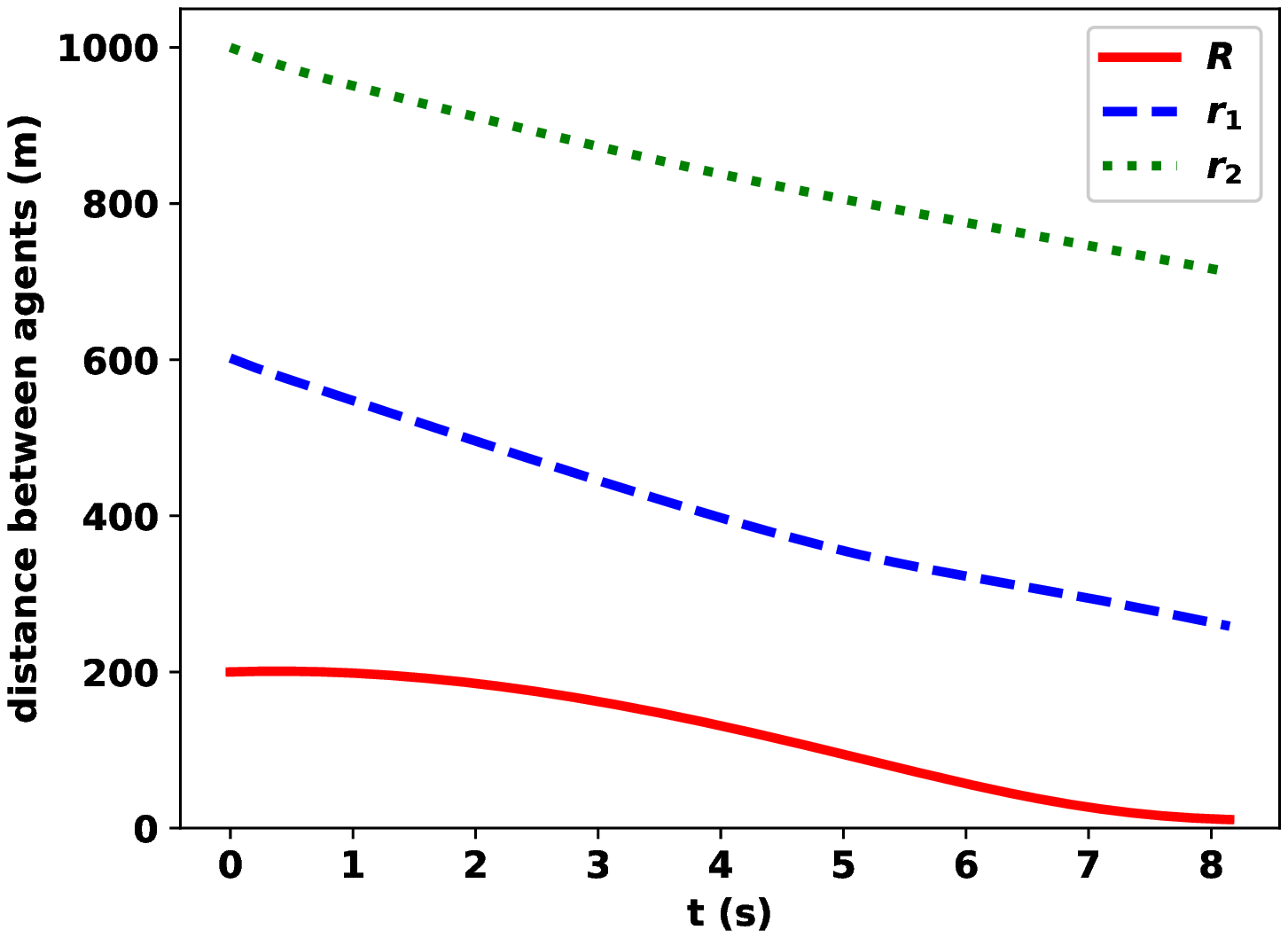}		
		\caption{}
		\label{fig:sim6_R}
	\end{subfigure}\hfil 
	\caption{Example escape scenario. (a) Agent trajectories. (b) Distance between the agents.}
	\label{fig:escape_sim}
\end{figure} 


\begin{figure}
	\centering 
	\begin{subfigure}{0.5\textwidth}
		\includegraphics[width=\linewidth]{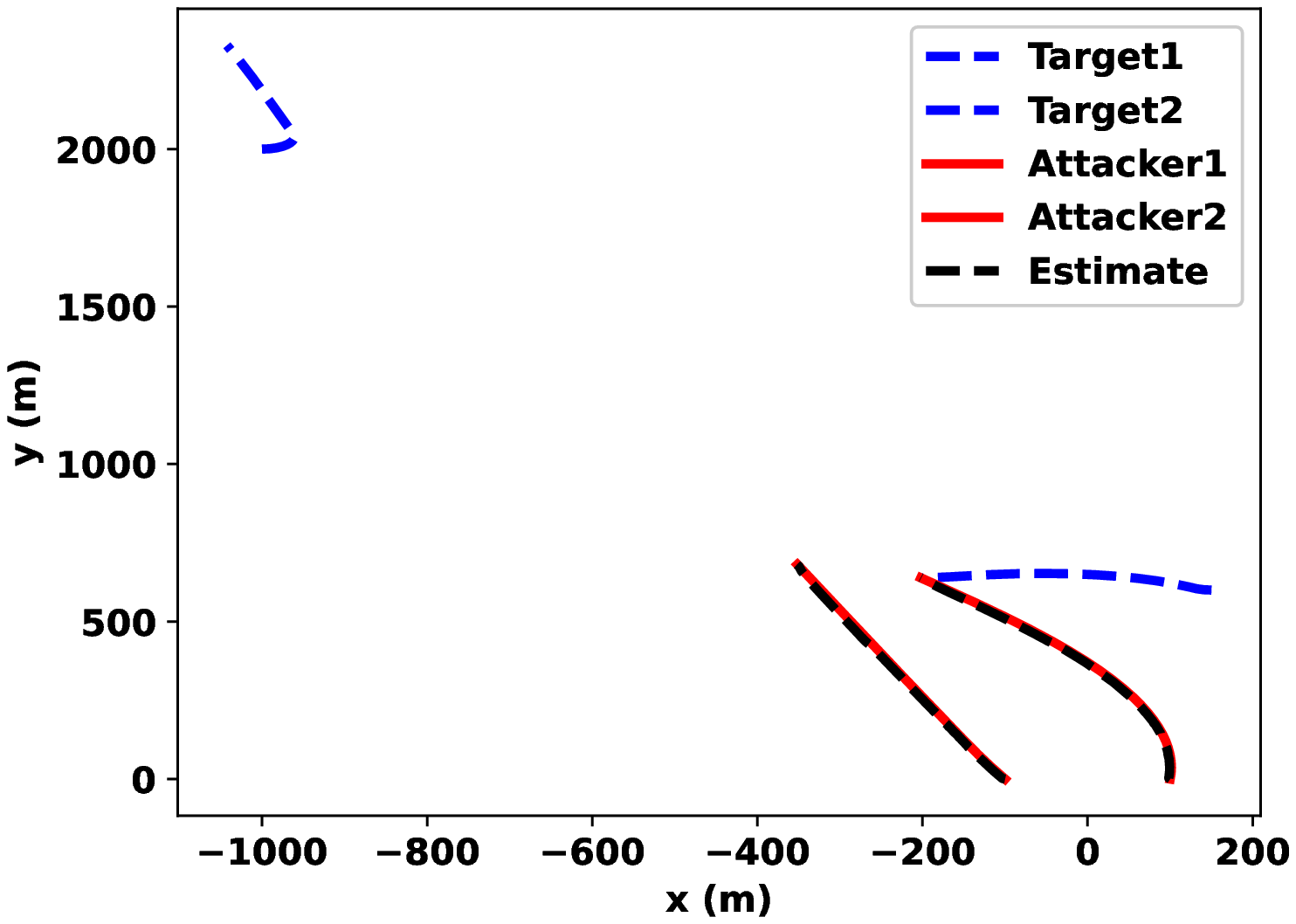}
		\caption{}
		\label{fig:sim5_traj}
	\end{subfigure}\hfil 
	\begin{subfigure}{0.5\textwidth}
		\includegraphics[width=\linewidth]{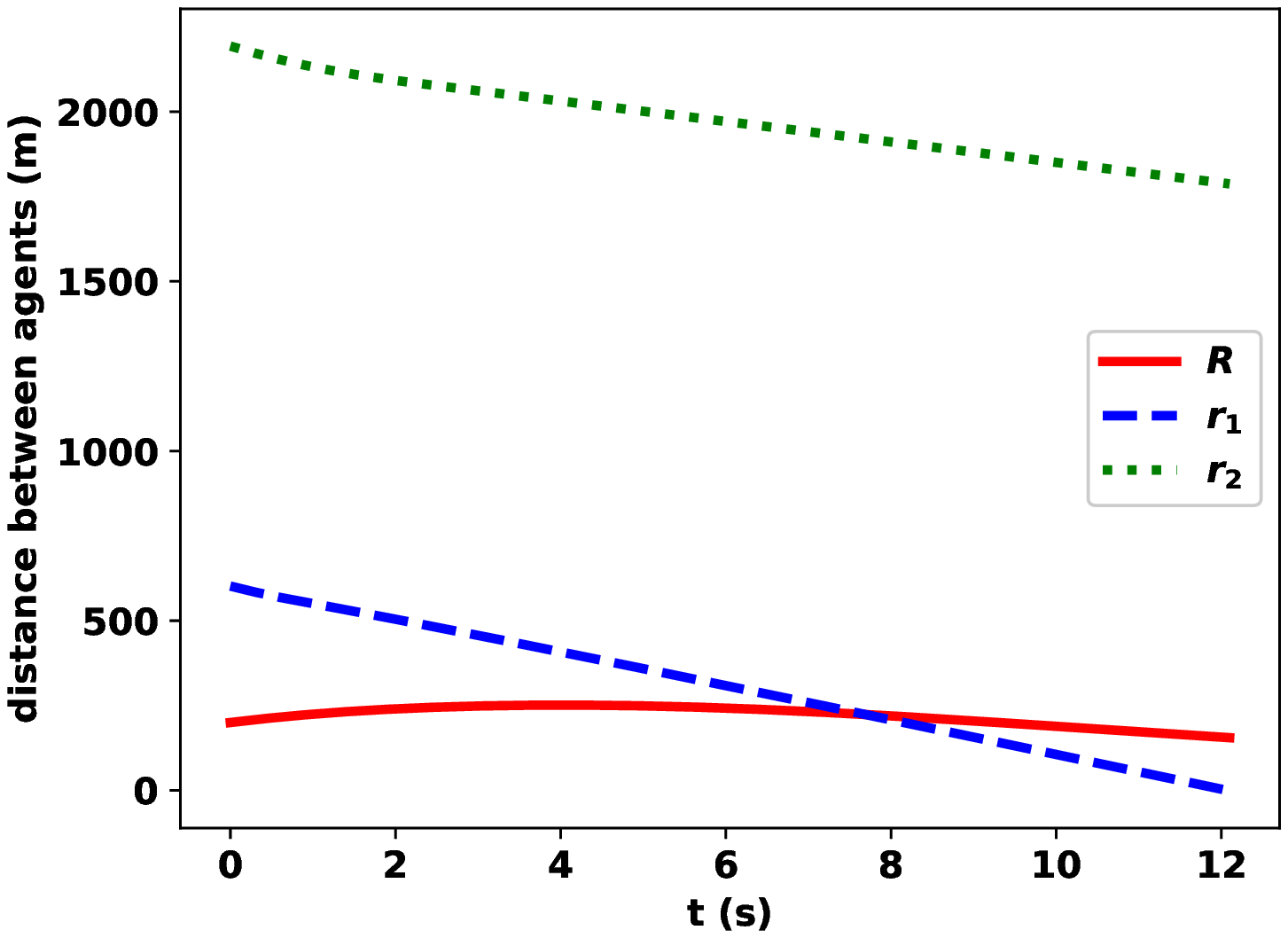}		
		\caption{}
		\label{fig:sim5_R}
	\end{subfigure}\hfil 
	\caption{Example capture scenario. (a) Agent trajectories. (b) Distance between the agents.}
	\label{fig:capture_sim}
\end{figure} 
	\section{Conclusions}\label{sec:conclusions}
In this paper, we proposed a cooperative strategy based on NMPC for the active defense of the targets in a two--targets two--attackers (2T2A) game. The NMPC computes control commands for the cooperative target pair against individually acting attackers, whose state information was estimated using an EKF. The problem was formulated as a game of kind that enables one to determine whether the targets would escape or not given the initial conditions of the attackers and the targets. The theoretical analysis was performed using Apollonius circles. The efficacy of the proposed scheme was validated using numerical simulations. The results show the ability of the NMPC to determine control commands satisfying the state and control constraints. Also, the integration of the EKF with the NMPC performed well as the estimation errors are within the $3\sigma$ bounds. The results also support the theoretical analysis. 

A natural extension of the proposed framework is to formulate the problem in 3D for aerial vehicles. Another extension is to develop a general framework for multiple attackers and multiple targets. Further, the work can be extended to multiple target--attacker--defender games. Also, optimal agent allocation strategies can be devised to maximize the number of targets escaped.

	
	\typeout{}
	\bibliographystyle{IEEEtran}
	\bibliography{ref}

\begin{thebibliography}{10}
\providecommand{\url}[1]{#1}
\csname url@samestyle\endcsname
\providecommand{\newblock}{\relax}
\providecommand{\bibinfo}[2]{#2}
\providecommand{\BIBentrySTDinterwordspacing}{\spaceskip=0pt\relax}
\providecommand{\BIBentryALTinterwordstretchfactor}{4}
\providecommand{\BIBentryALTinterwordspacing}{\spaceskip=\fontdimen2\font plus
\BIBentryALTinterwordstretchfactor\fontdimen3\font minus
  \fontdimen4\font\relax}
\providecommand{\BIBforeignlanguage}[2]{{%
\expandafter\ifx\csname l@#1\endcsname\relax
\typeout{** WARNING: IEEEtran.bst: No hyphenation pattern has been}%
\typeout{** loaded for the language `#1'. Using the pattern for}%
\typeout{** the default language instead.}%
\else
\language=\csname l@#1\endcsname
\fi
#2}}
\providecommand{\BIBdecl}{\relax}
\BIBdecl

\bibitem{isaacsbook}
R.~Isaacs, \emph{Differential games: a mathematical theory with applications to
  warfare and pursuit, control and optimization}.\hskip 1em plus 0.5em minus
  0.4em\relax Courier Corporation, 1999.

\bibitem{isler2005randomized}
V.~Isler, S.~Kannan, and S.~Khanna, ``Randomized pursuit-evasion in a polygonal
  environment,'' \emph{IEEE Transactions on Robotics}, vol.~21, no.~5, pp.
  875--884, 2005.

\bibitem{bopardikar2008discrete}
S.~D. Bopardikar, F.~Bullo, and J.~P. Hespanha, ``On discrete-time
  pursuit-evasion games with sensing limitations,'' \emph{IEEE Transactions on
  Robotics}, vol.~24, no.~6, pp. 1429--1439, 2008.

\bibitem{jagat2017nonlinear}
A.~Jagat and A.~J. Sinclair, ``Nonlinear control for spacecraft pursuit-evasion
  game using the state-dependent riccati equation method,'' \emph{IEEE
  Transactions on Aerospace and Electronic Systems}, vol.~53, no.~6, pp.
  3032--3042, 2017.

\bibitem{sunkara2018pursuit}
V.~Sunkara, A.~Chakravarthy, and D.~Ghose, ``Pursuit evasion games using
  collision cones,'' in \emph{AIAA Guidance, Navigation, and Control
  Conference}, 2018, p. 2108.

\bibitem{merz1974homicidal}
A.~W. Merz, ``The homicidal chauffeur,'' \emph{AIAA Journal}, vol.~12, no.~3,
  pp. 259--260, 1974.

\bibitem{alpern2008princess}
S.~Alpern, R.~Fokkink, R.~Lindelauf, and G.-J. Olsder, ``The “princess and
  monster” game on an interval,'' \emph{SIAM Journal on Control and
  Optimization}, vol.~47, no.~3, pp. 1178--1190, 2008.

\bibitem{karnad2009lion}
N.~Karnad and V.~Isler, ``Lion and man game in the presence of a circular
  obstacle,'' in \emph{IEEE/RSJ International Conference on Intelligent Robots
  and Systems}, 2009, pp. 5045--5050.

\bibitem{c2}
S.~Pan, H.~Huang, J.~Ding, W.~Zhang, C.~J. Tomlin \emph{et~al.}, ``Pursuit,
  evasion and defense in the plane,'' in \emph{American Control Conference,
  Montreal, Canada}.\hskip 1em plus 0.5em minus 0.4em\relax IEEE, 2012, pp.
  4167--4173.

\bibitem{sun2017multiple}
W.~Sun, P.~Tsiotras, T.~Lolla, D.~N. Subramani, and P.~F. Lermusiaux,
  ``Multiple-pursuer/one-evader pursuit--evasion game in dynamic flowfields,''
  \emph{Journal of guidance, control, and dynamics}, vol.~40, no.~7, pp.
  1627--1637, 2017.

\bibitem{ramana2017pursuit}
M.~Ramana and M.~Kothari, ``Pursuit strategy to capture high-speed evaders
  using multiple pursuers,'' \emph{Journal of Guidance, Control, and Dynamics},
  vol.~40, no.~1, pp. 139--149, 2017.

\bibitem{pachter2019two}
M.~Pachter and P.~Wasz, ``On a two cutters and fugitive ship differential
  game,'' \emph{IEEE Control Systems Letters}, vol.~3, no.~4, pp. 913--917,
  2019.

\bibitem{pachter2020cooperative}
M.~Pachter, A.~V. Moll, E.~Garcia, D.~Casbeer, and D.~Milutinovi{\'c},
  ``Cooperative pursuit by multiple pursuers of a single evader,''
  \emph{Journal of Aerospace Information Systems}, pp. 1--19, 2020.

\bibitem{perelman2011cooperative}
A.~Perelman, T.~Shima, and I.~Rusnak, ``Cooperative differential games
  strategies for active aircraft protection from a homing missile,''
  \emph{Journal of Guidance, Control, and Dynamics}, vol.~34, no.~3, pp.
  761--773, 2011.

\bibitem{c8}
O.~Prokopov and T.~Shima, ``Linear quadratic optimal cooperative strategies for
  active aircraft protection,'' \emph{Journal of Guidance, Control, and
  Dynamics}, vol.~36, no.~3, pp. 753--764, 2013.

\bibitem{garcia2020defense}
E.~Garcia, D.~W. Casbeer, and M.~Pachter, ``Defense of a target against
  intelligent adversaries: A linear quadratic formulation,'' in
  \emph{Conference on Control Technology and Applications (CCTA)}.\hskip 1em
  plus 0.5em minus 0.4em\relax IEEE, 2020, pp. 619--624.

\bibitem{c9}
A.~Ratnoo and T.~Shima, ``Line-of-sight interceptor guidance for defending an
  aircraft,'' \emph{Journal of Guidance, Control, and Dynamics}, vol.~34,
  no.~2, pp. 522--532, 2011.

\bibitem{ratnoo2012guidance}
------, ``Guidance strategies against defended aerial targets,'' \emph{Journal
  of Guidance, Control, and Dynamics}, vol.~35, no.~4, pp. 1059--1068, 2012.

\bibitem{yamasaki2013modified}
T.~Yamasaki, S.~Balakrishnan, and H.~Takano, ``Modified command to
  line-of-sight intercept guidance for aircraft defense,'' \emph{Journal of
  Guidance, Control, and Dynamics}, vol.~36, no.~3, pp. 898--902, 2013.

\bibitem{kumar2017cooperative}
S.~R. Kumar and T.~Shima, ``Cooperative nonlinear guidance strategies for
  aircraft defense,'' \emph{Journal of Guidance, Control, and Dynamics},
  vol.~40, no.~1, pp. 124--138, 2017.

\bibitem{rubinsky2014three}
S.~Rubinsky and S.~Gutman, ``Three-player pursuit and evasion conflict,''
  \emph{Journal of Guidance, Control, and Dynamics}, vol.~37, no.~1, pp.
  98--110, 2014.

\bibitem{c11}
T.~Shima, ``Optimal cooperative pursuit and evasion strategies against a homing
  missile,'' \emph{Journal of Guidance, Control, and Dynamics}, vol.~34, no.~2,
  pp. 414--425, 2011.

\bibitem{c12}
E.~Garcia, D.~W. Casbeer, and M.~Pachter, ``Cooperative strategies for optimal
  aircraft defense from an attacking missile,'' \emph{Journal of Guidance,
  Control, and Dynamics}, vol.~38, no.~8, pp. 1510--1520, 2015.

\bibitem{garcia2017cooperative}
E.~Garcia, D.~W. Casbeer, Z.~E. Fuchs, and M.~Pachter, ``Cooperative missile
  guidance for active defense of air vehicles,'' \emph{IEEE Transactions on
  Aerospace and Electronic Systems}, vol.~54, no.~2, pp. 706--721, 2017.

\bibitem{garcia2018design}
E.~Garcia, D.~W. Casbeer, and M.~Pachter, ``Design and analysis of
  state-feedback optimal strategies for the differential game of active
  defense,'' \emph{IEEE Transactions on Automatic Control}, vol.~64, no.~2, pp.
  553--568, 2018.

\bibitem{garcia2021complete}
------, ``The complete differential game of active target defense,''
  \emph{Journal of Optimization Theory and Applications}, pp. 1--25, 2021.

\bibitem{manoharan2019nmpc}
A.~Manoharan, M.~Singh, A.~Alessandretti, J.~G. Manathara, S.~Prusty,
  N.~Mohanty, I.~S. Kumar, A.~Sahoo, and P.~B. Sujit, ``Nmpc based approach for
  cooperative target defence,'' in \emph{American Control Conference
  (ACC)}.\hskip 1em plus 0.5em minus 0.4em\relax IEEE, 2019, pp. 5292--5297.

\bibitem{casbeer2018target}
D.~W. Casbeer, E.~Garcia, and M.~Pachter, ``The target differential game with
  two defenders,'' \emph{Journal of Intelligent \& Robotic Systems}, vol.~89,
  no.~1, pp. 87--106, 2018.

\bibitem{garcia2020two}
E.~Garcia, D.~W. Casbeer, M.~Pachter, J.~W. Curtis, and E.~Doucette, ``A
  two-team linear quadratic differential game of defending a target,'' in
  \emph{American Control Conference (ACC)}.\hskip 1em plus 0.5em minus
  0.4em\relax IEEE, 2020, pp. 1665--1670.

\bibitem{garcia2020multiple}
E.~Garcia, D.~W. Casbeer, A.~Von~Moll, and M.~Pachter, ``Multiple pursuer
  multiple evader differential games,'' \emph{IEEE Transactions on Automatic
  Control}, 2020.

\bibitem{tan2018cooperative}
Z.~W. Tan, R.~Fonod, and T.~Shima, ``Cooperative guidance law for target pair
  to lure two pursuers into collision,'' \emph{Journal of Guidance, Control,
  and Dynamics}, vol.~41, no.~8, pp. 1687--1699, 2018.

\bibitem{liang2020guidance}
H.~Liang, J.~Wang, J.~Liu, and P.~Liu, ``Guidance strategies for interceptor
  against active defense spacecraft in two-on-two engagement,'' \emph{Aerospace
  Science and Technology}, vol.~96, p. 105529, 2020.

\bibitem{mpcbook}
J.~A. Rossiter, \emph{Model-based predictive control: a practical
  approach}.\hskip 1em plus 0.5em minus 0.4em\relax CRC press, 2003.

\bibitem{ekfbook}
B.~D. Anderson and J.~B. Moore, ``Optimal filtering,'' \emph{Englewood Cliffs},
  vol.~21, pp. 22--95, 1979.

\bibitem{Andersson2019}
J.~A.~E. Andersson, J.~Gillis, G.~Horn, J.~B. Rawlings, and M.~Diehl,
  ``{CasADi} -- {A} software framework for nonlinear optimization and optimal
  control,'' \emph{Mathematical Programming Computation}, vol.~11, no.~1, pp.
  1--36, 2019.

\bibitem{siouris2004missile}
G.~M. Siouris, \emph{Missile guidance and control systems}.\hskip 1em plus
  0.5em minus 0.4em\relax Springer Science \& Business Media, 2004.

\bibitem{c13}
M.~Guelman, ``A qualitative study of proportional navigation,'' \emph{IEEE
  Transactions on Aerospace and Electronic Systems}, no.~4, pp. 637--643, 1971.

\end{thebibliography}
	
\end{document}